\documentclass[manuscript,acmsmall,screen]{acmart}
\AtBeginDocument{%
  }

\setcopyright{acmlicensed}
\copyrightyear{2018}
\acmYear{2018}
\acmDOI{XXXXXXX.XXXXXXX}
\acmConference[Conference acronym 'XX]{Make sure to enter the correct
  conference title from your rights confirmation email}{June 03--05,
  2018}{Woodstock, NY}
\acmISBN{978-1-4503-XXXX-X/2018/06}



\usepackage{tikz} 
\usetikzlibrary{arrows,shapes,automata,positioning,decorations,calc} 
\usetikzlibrary{spy,backgrounds,fit,shapes.misc}
\pgfdeclaredecoration{complete sines}{initial}
{
  \state{initial}[
  width=+0pt,
  next state=sine,
  persistent precomputation={\pgfmathsetmacro\matchinglength{
      \pgfdecoratedinputsegmentlength / int(\pgfdecoratedinputsegmentlength/\pgfdecorationsegmentlength)}
    \setlength{\pgfdecorationsegmentlength}{\matchinglength pt}
  }] {}
  \state{sine}[width=\pgfdecorationsegmentlength]{
    \pgfpathsine{\pgfpoint{0.25\pgfdecorationsegmentlength}{0.5\pgfdecorationsegmentamplitude}}
    \pgfpathcosine{\pgfpoint{0.25\pgfdecorationsegmentlength}{-0.5\pgfdecorationsegmentamplitude}}
    \pgfpathsine{\pgfpoint{0.25\pgfdecorationsegmentlength}{-0.5\pgfdecorationsegmentamplitude}}
    \pgfpathcosine{\pgfpoint{0.25\pgfdecorationsegmentlength}{0.5\pgfdecorationsegmentamplitude}}
  }
  \state{final}{}
}
\usepackage{bm}
\usepackage{booktabs}
\usepackage{amsmath,amsfonts}
\usepackage{algorithm}
\usepackage{algpseudocode}
\usepackage{graphicx}
\usepackage{textcomp}
\usepackage{pgfplots}
\usepackage{xcolor}
\usepackage{subcaption}

\usepackage{booktabs}

\usepackage{algorithm}

\usepackage{tikz}
\usetikzlibrary{calc}
\usetikzlibrary{matrix, positioning}
\usetikzlibrary{pgfplots.groupplots}
\usetikzlibrary{arrows}
\usepackage{tzplot}


\usepackage{acronym}
\usepackage[font=small,skip=0pt]{caption}

\newcommand{\ignore}[1]{{}}

\newcommand{\SF}{Simulink/Stateflow}



\acrodef{CPS}{Cyber-physical Systems}
\acrodef{STL}{Signal Temporal Logic}
\acrodef{TFL}{Time Frequency Logic}
\acrodef{TA}{Timed Automata}
\acrodef{FSM}{Finite State Machine}
\acrodefplural{FSM}{Finite State Machine}
\acrodef{DES}{Discrete Event Simulation}
\acrodef{DSP}{Digital Signal Processor}
\acrodef{DTTS}{Discrete Time Transition System}
\acrodef{EA}{Evolutionary Algorithm}
\acrodef{HA}{Hybrid Automata}
\acrodef{HIOA}{Hybrid Input Output Automaton}
\acrodefplural{HIOA}{Hybrid Input Output Automata}
\acrodef{ILP}{Integer Linear Programming}
\acrodef{MCU}{Microcontroller Unit}
\acrodef{ODE}{Ordinary Differential Equation}
\acrodef{PoC}{Plant-on-a-Chip}
\acrodef{PLL}{Phase-locked loop}
\acrodef{QSS}{Quantized State System}
\acrodef{QSHA}{Quantized State Hybrid Automata}
\acrodef{SA}{Sinoatrial}
\acrodef{SHIOA}{Synchronous Hybrid Input Output  Automata}
\acrodef{SWIOA}{Synchronous Witness Input Output Automata}
\acrodef{TTS}{Time Triggered System}
\acrodef{WHIOA}{Well-formed Hybrid Input Output Automata}
\acrodef{WCET}{Worst-Case Execution Time}
\acrodef{WCRT}{Worst-Case Reaction Time}
\acrodef{SEA}{Synchronous Emulation Automaton}
\acrodefplural{SEA}{Synchronous Emulation Automata}
\acrodef{SWA}{Synchronous Witness Automaton}
\acrodefplural{SWA}{Synchronous Witness Automata}
\acrodef{DSL}{Domain Specific Language}
\acrodef{LTI}{Linear Time Invariant}
\acrodef{SMT}{Satisfiability Modulo Theory}
\acrodef{BMC}{Bounded Model Checking}
\acrodef{DHIOA}{Discrete Hybrid Input Output  Automata}
\acrodef{RHS}{Right Hand Side}
\acrodef{LHS}{Left Hand Side}
\acrodef{TH}{Thermostat}
\acrodef{WLM}{Water Level Monitor}
\acrodef{AFb}{Atrial Fibrillation}
\acrodef{Reactor}{Nuclear Reactor}
\acrodef{Robot}{Nonholonomic Robot}
\acrodef{RK}{Runge-Kutta}
\acrodef{DASSL}{Differential Algebraic System Solver}
\acrodef{DQSS}{Dynamic Quantized State System}
\acrodef{QSHIOA}{Quantized State Hybrid Input Output Automaton}
\acrodef{SWC}{Steering Wheel Control}
\acrodef{FA}{Frequency Automata}

\newtheorem{definition}{Definition}
\newtheorem{theorem}{Theorem}[section]
\newtheorem{lemma}[theorem]{Lemma}
\newcounter{remark}

\algrenewcommand\algorithmicrequire{\textbf{Input:}}
\algrenewcommand\algorithmicensure{\textbf{Output:}}
\usepackage{stmaryrd}
\newcommand\numberthis{\addtocounter{equation}{1}\tag{\theequation}}

\usepackage{wrapfig}
\algdef{SE}[DOWHILE]{Do}{doWhile}{\algorithmicdo}[1]{\algorithmicwhile\ #1}%

\usepackage{enumitem}


\begin{document}

\title{Frequency Automata: A novel formal model of hybrid systems in combined time and frequency domains}

\author{Moon Kim}
\email{moon.kim@auckland.ac.nz}
\author{Avinash Malik}
\email{avinash.malik@auckland.ac.nz}

\author{Partha Roop}
\authornote{Partha is the corresponding author}
\email{p.roop@auckland.ac.nz}

\affiliation{%
  \institution{The University of Auckland}
  \city{Auckland}
  \country{New Zealand}
}

\renewcommand{\shortauthors}{Kim et al.}

\begin{abstract}
  Hybrid systems are mostly modelled, simulated, and verified in the
  time domain by computer scientists. Engineers, however, use both
  frequency and time domain modelling due to their distinct advantages.
  For example, frequency domain modelling is better suited for control
  systems, using features such as spectra of the signal. Considering
  this, we introduce, for the first time, a formal model called
  frequency automata for hybrid systems modelling and simulation, which
  are represented in combined time and frequency domains. We propose a
  sound translation from \ac{HA} to \acf{FA}. We also develop a
  numerical simulator for \ac{FA} and compare it with the performance of
  \ac{HA}. Our approach provides precise level crossing detection and
  efficient simulation of hybrid systems. We provide empirical results
  comparing simulation of \ac{HA} via its translation to \ac{FA} and its
  simulation via Matlab Simulink/Stateflow\textregistered. The results show clear
  superiority of the proposed technique with the execution times of the
  proposed technique $118\times$ to $1129\times$ faster compared to
  Simulink/Stateflow\textregistered. Moreover, we also observe that the proposed
  technique is able to detect level crossing with complex guards
  (including equality), which Simulink/Stateflow\textregistered\ fail.
\end{abstract}

\begin{CCSXML}
<ccs2012>
 <concept>
  <concept_id>00000000.0000000.0000000</concept_id>
  <concept_desc>Do Not Use This Code, Generate the Correct Terms for Your Paper</concept_desc>
  <concept_significance>500</concept_significance>
 </concept>
 <concept>
  <concept_id>00000000.00000000.00000000</concept_id>
  <concept_desc>Do Not Use This Code, Generate the Correct Terms for Your Paper</concept_desc>
  <concept_significance>300</concept_significance>
 </concept>
 <concept>
  <concept_id>00000000.00000000.00000000</concept_id>
  <concept_desc>Do Not Use This Code, Generate the Correct Terms for Your Paper</concept_desc>
  <concept_significance>100</concept_significance>
 </concept>
 <concept>
  <concept_id>00000000.00000000.00000000</concept_id>
  <concept_desc>Do Not Use This Code, Generate the Correct Terms for Your Paper</concept_desc>
  <concept_significance>100</concept_significance>
 </concept>
</ccs2012>
\end{CCSXML}

\ccsdesc[500]{Do Not Use This Code~Generate the Correct Terms for Your Paper}
\ccsdesc[300]{Do Not Use This Code~Generate the Correct Terms for Your Paper}
\ccsdesc{Do Not Use This Code~Generate the Correct Terms for Your Paper}
\ccsdesc[100]{Do Not Use This Code~Generate the Correct Terms for Your Paper}

\keywords{Do, Not, Us, This, Code, Put, the, Correct, Terms, for,
  Your, Paper}

\received{20 February 2007}
\received[revised]{12 March 2009}
\received[accepted]{5 June 2009}

\maketitle

\section{Introduction}
\label{sec:introduction}

Formal methods~\cite{woodcock2009formal} encompass a set of
mathematically founded techniques for specification, verification and
synthesis. These are especially suited for the design of
\acfp{CPS}~\cite{alur2015principles}.\ \acp{CPS} utilise the control of
physical processes, using distributed embedded controllers. Here,
analogue values are received from sensors, which are processed using
different signal processing techniques, before the clean signal is
utilised as an input by the controller. As \acp{CPS} are highly
safety-critical, there is a need for formal modelling and analysis of
such signals.

The study of signals, signal representation and analysis forms the core
of a discipline of electrical engineering called \emph{signal
  processing}~\cite{devasahayam2019signals}. In the time domain, the
changes in the signal's value can be observed over time. In contrast, in
the frequency domain features such as energy, power, and spectra can be
considered over different frequency ranges. Engineers use both time
domain and frequency domain methods for signal processing as both
provide distinct advantages. Frequency domain analysis, is especially
useful to determine the phase changes of a signal, for removal of
certain undesirable frequency components through filtering, and for
control purposes, such as to determine the controllability and stability
of a system.

In contrast, computer scientists are mostly interested in the time
domain representation. For example, majority of formal models are either
un-timed, logically timed or physically timed, represented in the time
domain~\cite{baier2008principles}. Un-timed models include automata and
logics such as finite automata and propositional
logic~\cite{baier2008principles}, for example. Logically timed models,
also known as synchronous models~\cite{benveniste03}, form the basis of
synchronous programming languages. In contrast, physical time can be
captured in formal models such as timed automata~\cite{alur1994theory}
and hybrid automata~\cite{alur2015principles}. Majority of mathematical
logics since the pioneering work of Pnueli~\cite{pnueli1977temporal},
who provided the relationship between programs and temporal logics, have
been classified as either linear-time or branching time logics, in the
time domain~\cite{baier2008principles}.

With the advent of \ac{CPS}, there has been considerable interest in
formal representation and verification of signals, using
\acf{STL}~\cite{maler2004monitoring}, in 2004 by the pioneering work of
Maler and Nickovic. \ac{STL} provides a time domain representation of a
signal, $x(t)$ so that we can make assertions about its amplitude and
periodicity, etc. Since 2004, there is a large body of work on \ac{STL}
and its applications in \ac{CPS}~\cite{donze2013signal, su2025runtime}.
However, only recently there has been interest in formalising the
frequency domain representation. To the best of our knowledge, the first
work in this direction is an unified framework for both time and
frequency domain using \acf{TFL}~\cite{donze2012temporal}. Here, a
signal is represented using both a time window and a frequency domain
representation using short time frequency transform (STFL). In the logic
\ac{TFL}, \ac{STL} is extended with a spectral signal
$y = f_{L,\omega}(x)$, where $f_{L,\omega}(x)$ is an operator that produces the
projection of the $L-$spectrogram of signal $x$ on frequency
$\omega$~\cite{donze2012temporal}. More recently, \ac{TFL} has been used in
applications such as detection of signal
abnormalities~\cite{nguyen2017abnormal}. Two other recent approaches to
seamless analysis in both time and frequency domain are the
consideration of the filtering operation from a logical
perspective~\cite{rodionova2016temporal} and a logical approach to
signal processing~\cite{basnet2020logical}, which uses Volterra series
for monitoring temporal logic formulae in the frequency domain.

\subsection{Gap and hypothesis}

All the above formulations to leverage time and frequency domain
analysis are in the logical realm, and are mostly \ac{STL} extensions.
We would like to assert that similar formulations are needed in the
automata realm, especially to consider the representation of hybrid
systems, using well known formalisms such as the hybrid automata. To the
best of our knowledge, there is no work that provides a frequency domain
representation of \acf{HA}. A question may arise, why this may be
needed? We hypothesise that a frequency domain mapping of \ac{HA} will
enable more precise simulation and analysis of hybrid systems compared
to the time domain model.

\subsection{Intuition and overview of the solution}

A hybrid automaton combines continuous evolution of a set of real-valued
variables, which represent signals, with discrete mode switches. A given
mode switch happens when a \emph{level crossing} is detected. A level
crossing is usually reached when a Boolean condition over the value of a
signal or a set of signals is satisfied.

\ac{HA}, however, suffers from many limitations. First the level
crossing detection problem is very challenging. When solved dynamically,
there are many challenges due to the need for back tracking. While
solving statically, level crossing is also missed, as the signal can
change values continuously within a time step. Second, the efficiency
and precision of simulation depends on the step size used, which has a
large bearing on the tools. By going to the frequency domain, we
hypothesise a solutions that overcomes these challenges.

We are inspired by an approach for modelling the action potential of cardiac cells, using a 
model called \emph{the resonant model}~\cite{sehgal2019resonant}, which uses a small number of terms in 
a Fourier series to approximate the function, that is combined with a discrete 
controller. However, the developed model is not formalised. 
Our intuition is the following. The evolution of signals is captured as
\acfp{ODE} in a \ac{HA}. A signal can also be represented by capturing
the same flow as a Sinusoid by going to the frequency domain as follows.
First we can map the flow constraint in an ODE to an equivalent
frequency representation using angular velocity (reciprocal of the
frequency) of the same signal, when projected in an unit circle. An
infinitesimal change in time can be mapped to an equivalent
infinitesimal change in phase angle. Also, instead of modelling the rate
of change as an ODE, we capture this as an equivalent change in angular
velocity. Finally, instead of performing numerical integration to
recover the signal, we take angular steps that converge exactly to the
level crossing.

As we move away from time based integration into the realm of phase and
time-based integration, level crossings are never missed. This is since,
as the vector moves in the unit circle by taking small step changes in
the phase angle, the level crossing will definitely be crossed before
the $2\pi$ revolution is completed. Moreover, we show that this approach
enables a dynamic step size solution, which not only precisely converges
to the guard, but also takes significantly fewer steps in the solution
simulation, compared to traditional \ac{HA} simulation.

We make the following assumptions. First, we limit ourselves to
\acp{ODE} of the form $\dot{x}(t) = f(x(t))$. Second, we consider a
subset of \ac{HA} which only exhibit switching behaviour i.e., the
intersection of any guard with the corresponding invariant is empty. The
main contributions of our approach are as follows:

\begin{enumerate}
\item We propose \acf{FA} as a novel formal model of hybrid systems in
  the frequency domain.
\item We propose a sound translation of \ac{HA} to \ac{FA}.
\item We show that we can precisely detect level crossings of hybrid
  systems, for the first time.
\item We propose a new method of numerical integration by utilising the
  \ac{FA} model, which is more efficient and precise compared to the
  existing methods of \ac{HA} simulation.
\end{enumerate}

The rest of the paper is arranged as follows. We start with a motivating
example in Section~\ref{sec:runn-example-probl}.
Section~\ref{sec:background} gives the background in infinitesimals and
non-standard analysis required to prove our propositions.
Section~\ref{sec:ha-syntax-semantics} gives the syntax and semantics of
the \ac{HA} and the \ac{FA}. In this section we also prove the
equivalence relation of the two semantics.
Section~\ref{sec:an-effic-simul} gives an efficient simulation algorithm
for the \ac{FA}. The benchmark results are provided in
Section~\ref{sec:experimental-results}. Related work describing the
current state-of-the-art in \ac{HA} simulation is presented in
Section~\ref{sec:related-work}. The conclusions and future work are
provided in Section~\ref{sec:concl-future-work}.

\section{Running example, problem description and the proposed solution}
\label{sec:runn-example-probl}

Hybrid systems comprise of a physical process, also known as the
\emph{plant}, which is controlled by an appropriate \emph{controller}.
While the plant exhibits continuous dynamics, the controller is executed
on an embedded platform and hence exhibits discrete behaviour. Matlab
Simulink/Stateflow\textregistered provides an industry grade software for the
modelling, simulation and code generation for such systems.
Figure~\ref{fig:steeringwheel} depicts a typical steering wheel of an
autonomous vehicle and its associated controller, which is our running
example.

In this example, the steering wheel angle is initially at
$\frac{\pi}{2}$ (Figure~\ref{fig:1a}). The angular velocity is positive
when the wheel rotates anti-clockwise and negative when it rotates clock
wise. In Figure~\ref{fig:1a}, the up arrow is the point that indicates
the initial angle of the steering wheel. The control objective is to
maintain the steering wheel rotation within limited range as indicated
with red markers. Figure~\ref{fig:1b} shows the overall
Simulink/Stateflow\textregistered\ implementation of the control system, with the
steering wheel plant model (Figure~\ref{fig:1c}) being controlled by a
discrete integrated controller. The model sets the initial values of the
steering wheel position $x(t)$ and its cosine ($y(t)$), respectively.
Then the Stateflow model starts from initial location \texttt{L1}. In
this location, the steering wheel position is continuously updated
anti-clockwise with an angular velocity of 0.1 radians/second. Once the
steering wheel position reaches the left red marker, indicated by
$\cos(x(t)) \leq -0.99$, then an instantaneous switch is made to location
\texttt{L2} and the model starts evolving the steering wheel in the
clock-wise direction with angular velocity of -4 radians/second. This
system evolves forever trying to maintain the position of the steering
wheel between the two red markers.

\begin{figure}[tb]
  \subfloat[Steering wheel~\label{fig:1a}]{
    \scalebox{0.45}{\input{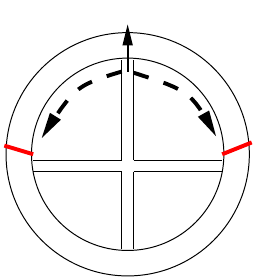_t}}
  }
  \qquad
  \subfloat[The overall Stateflow implementation of the steering wheel
  control~\label{fig:1b}]{
    \includegraphics[scale=0.22]{./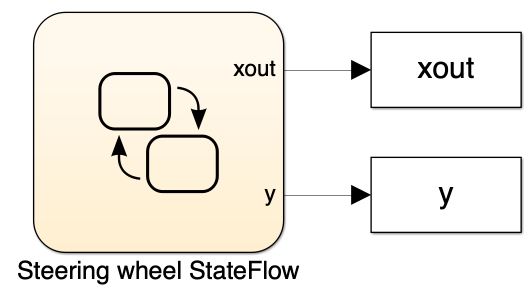}
  }
  \qquad
  \subfloat[The Stateflow model~\label{fig:1c}]{
    \includegraphics[scale=0.3]{./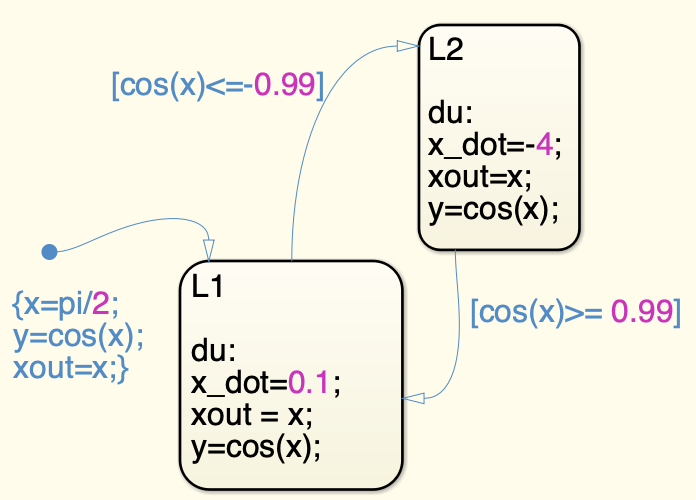}
  }
  \caption{Steering wheel of a car and its implementation in Stateflow}
  \label{fig:steeringwheel}
\end{figure}

\begin{figure}[tbh]
  \centering
  \subfloat[Incorrect wheel angle~\label{fig:2a}]{
    \includegraphics[scale=0.11]{./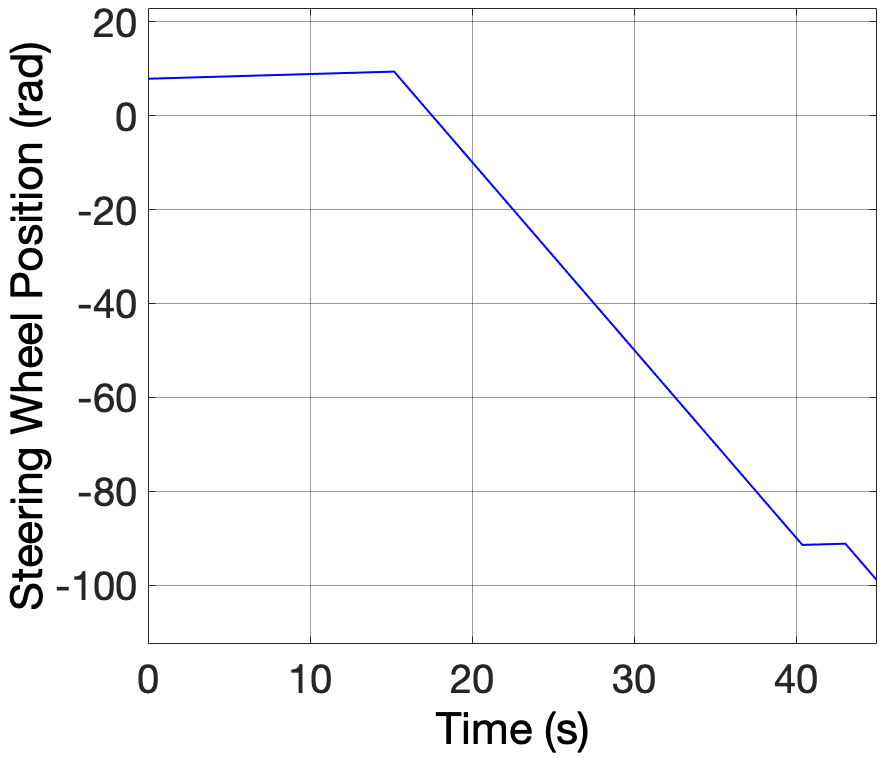}
  }
  \subfloat[Incorrect level crossings~\label{fig:2b}]{
    \includegraphics[scale=0.11]{./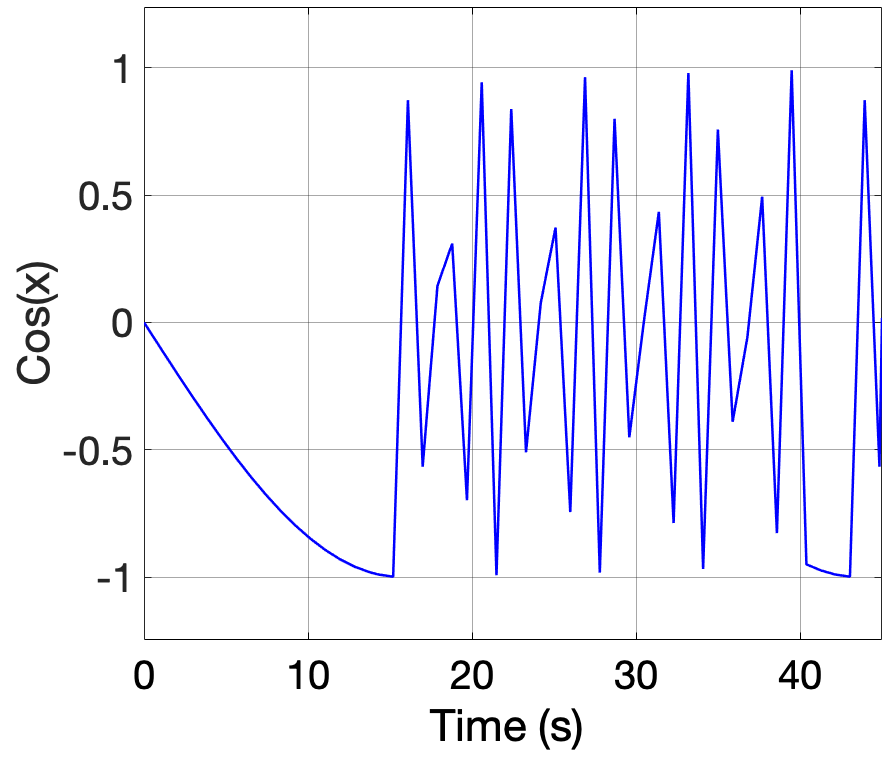}
  }

  \subfloat[Correct wheel angle~\label{fig:2c}]
  {
    \includegraphics[scale=0.21]{./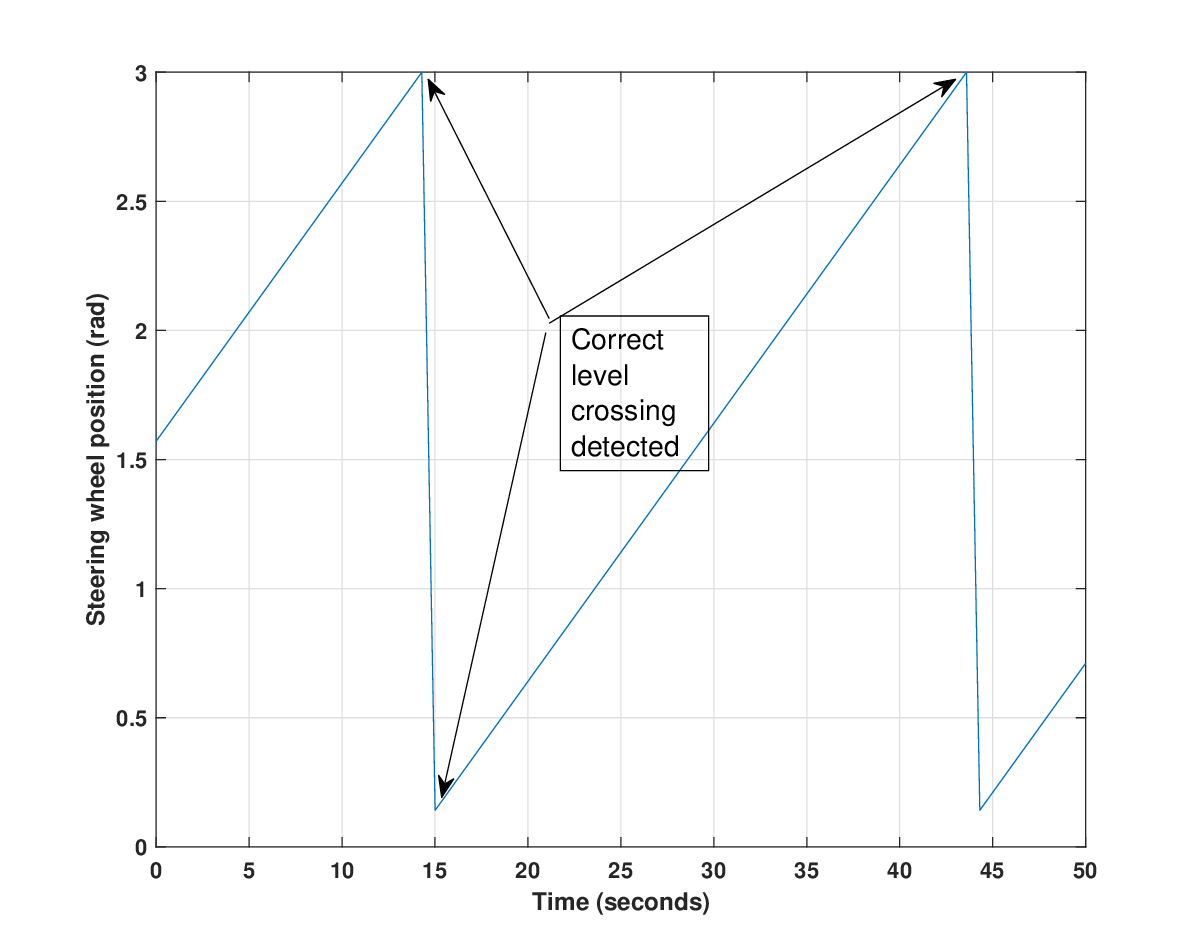}
  }
  \subfloat[Correct level crossings~\label{fig:2d}]
  {
    \includegraphics[scale=0.21]{./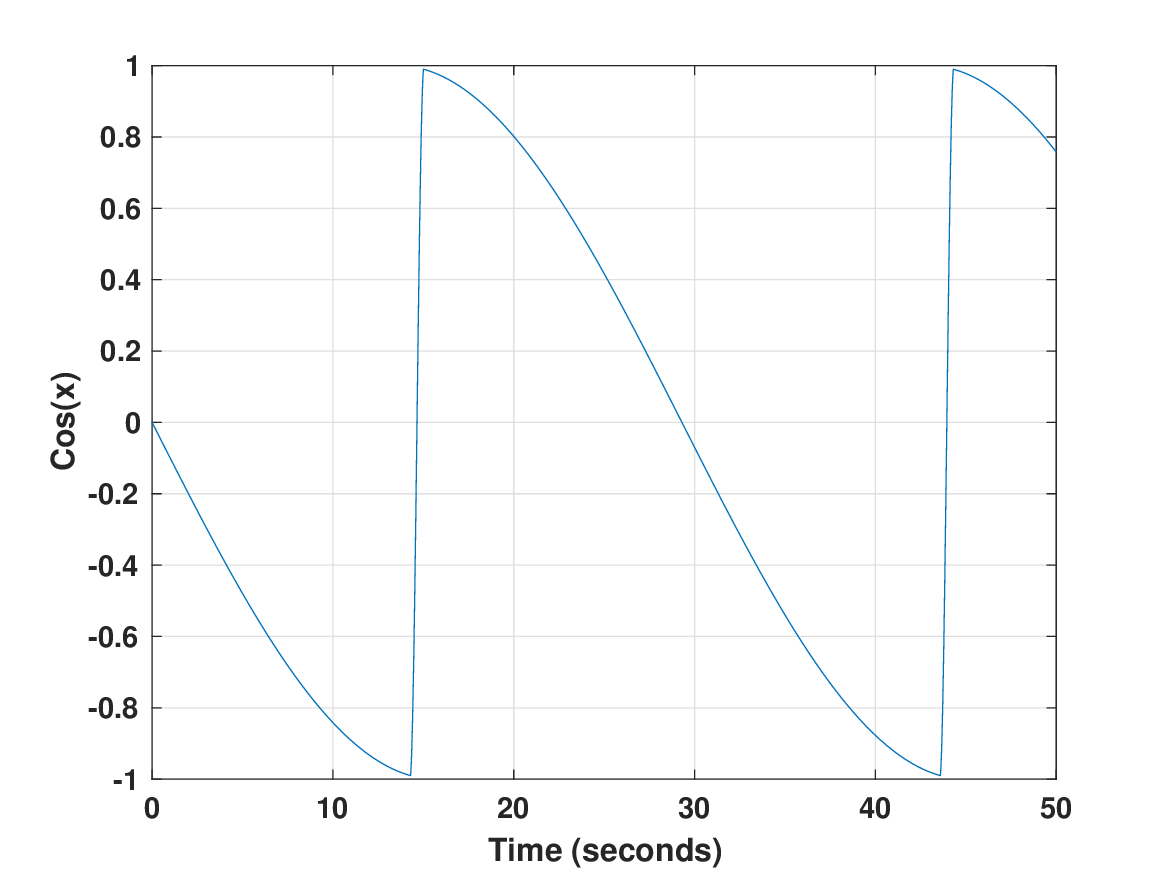}
  }
  \caption{Outputs from the Stateflow model for a $50\sec$ simulation
    time}
  \label{fig:stateflowoutput}
\end{figure}

Figure~\ref{fig:stateflowoutput} shows the different traces generated
from the Stateflow model by using different solvers. First, the default
setting uses a variable step-size ODE45 solver. We set the error
tolerance to $10^{-12}$. The output trace of the steering wheel angle
$x(t)$ is erroneous as shown in Figure~\ref{fig:2a}. It is expected that
$x(t)$ is bounded, however, the result shows that the angle keeps
increasing negatively. The reason for the erroneous trace becomes
obvious by looking at the output $\cos(x(t))$, the level crossing guard
condition in Figure~\ref{fig:2b}. The simulation engine is unable to
find the \textit{very} first level-crossing when the condition
$\cos(x(t)) \geq 0.99$ or $\cos(x(t)) \leq -0.99 $ holds.

This is the well known \textbf{even number of level-crossing detection}
problem~\cite{zhang2008zero}. The level-crossing conditions
$\cos(x) \leq -0.99 $ and $\cos(x) \geq 0.99$ hold every $2\times\pi$ radians. The
level-crossing detector instead of finding the very first level-crossing
ends up detecting some level-crossing at $(2n\pi)+ x$ radians, for some
natural $n > 1$.

Reducing the error tolerance further as suggested in the \SF\textregistered\
documentation still gives a wrong level-crossing value. The only working
solution is to manually reduce the step-size in order to find the very
first level-crossing\footnote{Changing the solver to a stiff system
  solver has no affect.}. Figures~\ref{fig:2c} and~\ref{fig:2d} show the
correct steering wheel angle and correct level-crossings being detected
with a maximum (fixed) step-size of $10^{-2}$. Step-sizes of greater
than $10^{-2}$ also give incorrect results. However, decreasing the
step-size leads to a significant slowdown in the execution speed of the
model. The variable step-size solver with the default settings takes 62
simulation steps, while with a fixed step-size of $10^{-2}$ ends up
taking 5000 simulation steps, i.e., an overall slowdown of around $80$
times.

The \ac{HA} representation of the steering wheel system is shown in
Figure~\ref{fig:stwheel}. Following the Stateflow model, there is a
single starting location L1 with initial values set before entering the
initial location (shown with the dashed arrow on location L1). The model
remains in this location while the invariant holds, and the outgoing
edge guards are not triggered. As soon as the outgoing edge guard is
triggered, the model makes an instantaneous jump to location L2, during
this jump the reset actions set the new values of the continuos
variables --- in this example the continuos variables carry their values
over. Upon entering location L2, the continuous variables start
evolving again according to their respective \acp{ODE} and updates.

\begin{figure}[thb]
  \centering
  \scalebox{0.9}{\begin{tikzpicture}[->,>=stealth',shorten >=1pt,auto,
node distance=6.4cm,
semithick,scale=0.7, transform shape]
\tikzstyle{every state}=[rectangle,rounded corners, minimum height =
1.2cm, text width=2.8cm, text centered,
fill=blue!20,draw=none,text=black, draw,line width=0.3mm]

\node[state,
label={[shift={(0,0.1)}]$\neg(y(t) \leq -0.99)$}, 
label={[shift={(1.5,0.1)}]\textbf{\texttt{L1}}}]
(T2)  {
  $\dot{x}(t) = 0.1$ \\
  $y(t) = cos(x(t))$
};

\node[state, 
label={[shift={(0,0.1)}]$ \neg (y(t) \geq 0.99)$}, 
label={[rotate=0,shift={(-1.5,0.1)}]\textbf{\texttt{L2}}  }] 
(T3) [node distance=6.2cm, right of=T2]
{
  $\dot{x}(t) = -4$ \\
  $y(t) = cos(x(t))$
};

\draw[transform canvas={yshift=0.3em}] (T2) -- (T3) node [midway]
{$\frac{y(t) \leq -0.99}{x'(t)=x(t), y'(t)=y(t)}$};

\draw[transform canvas={yshift=-0.3em}] (T3) -- (T2) node [midway]
{$\frac{y(t) \geq 0.99}{x'(t)=x(t),y'(t)=y(t)}$};

\draw[<-, dashed](T2.180) -- node[below] {} ++(-2cm,-0cm);
\draw[<-, dashed](T2.180) -- node[above] {
  $
  \begin{aligned}
    x(0) = \pi/2, y (0) = cos(x(0)), \\
  \end{aligned}
  $} ++(-4.5cm,-0cm);



\end{tikzpicture}

  \caption{\ac{HA} representation of the steering wheel system}
  \label{fig:stwheel}
\end{figure}
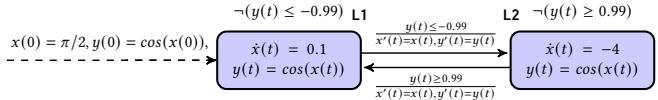

The proposed \ac{HA} execution technique described in
Section~\ref{sec:prop-simul-techn} simulates the steering wheel \ac{HA}
resulting in the correct outputs (Figures~\ref{fig:2c} and~\ref{fig:2d})
without missing the level crossings. This simulation takes 38 simulation
steps compared to the correct Simulink/Stateflow\textregistered\ simulation with 5000
steps, which is a speed up of $131 \times$.


\subsection{The proposed solution --- from time to frequency domain
  mapping}
\label{sec:prop-simul-techn}

\begin{figure}[tbh]
  \centering
  \includegraphics[scale=0.8]{./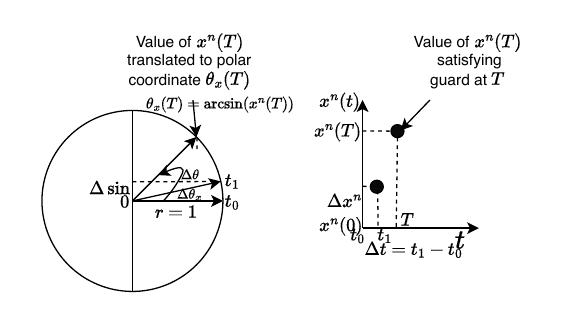}
  \caption{Translating cartresian coordinates to polar coordinates
    during translation of \ac{HA} to \ac{FA}}
  \label{fig:hatofa}
\end{figure}

\ignore{In this section we given an overview of our main ideas. The
formalisation and elaboration of our main ideas is presented in the
following sections.}

Consider the \acf{HA} model of the steering wheel (c.f.
Figure~\ref{fig:stwheel}), which consists of two locations $L1$ and
${L2}$. The continuos variable $x(t)$ describes the state of the
\ac{HA}. The variable $y(t)$ captures the cosine of $x(t)$ and is used
to capture the instantaneous switch between the two locations. The
variable $x(t)$ evolves using the \ac{ODE} (or equivalently vector flow)
$\dot{x}(t) = 0.1$ in location $L1$. This location is left as soon as
the outgoing guard $y(t) \leq -0.99$ is triggered on the edge connecting
${L1}$ to ${L2}$. Upon taking this guard the so called reset relations
($x'(t) = x(t)$ and $y'(t) = y(t)$) carry over the value of the two
variables over to location ${L2}$. The negation of the outgoing guard
acts as the invariant on location ${L1}$. We also have invariants,
\acp{ODE}, and outgoing guards and reset relations for location ${L2}$.

Before we present the main ideas of translating the \ac{HA} into
\acf{FA} we expound our assumptions. For a given \ac{HA} (e.g.,
Figure~\ref{fig:stwheel}) we assume that we know the following
\textit{statically} (without executing the \ac{HA}):

\begin{enumerate}
\item Initial value of every state variable when entering some location
  in the \ac{HA}. For example, we know that $x(0) = \pi/2$ when the
  steering wheel example enters location $L1$. Same with $L2$.
\item The (possibly symbolic) value of the state variable(s) that
  trigger one or more outgoing edge(s) from a given location. For the
  steering wheel example, we know that $x(T) = \arccos(-0.99)$ triggers
  the outgoing edge guard from location $L1$ at some time $T$.
\item The maximum value that any state variable can take in any given
  location. For the steering wheel example we have
  $\max(range(x(t))) = \max[\pi/2, abs(\arccos(-0.99))]$.
\end{enumerate}

Our translation from the time domain to the frequency domain proceeds as
follows:

\begin{enumerate}[label=Step-(\arabic*)]
\item Every location and edge in the \ac{HA} has an equivalent location
  and edge in the \ac{FA}.
\item For every state variable in the \ac{HA} we have a so called
  normalised state variable (from here on in called $x^{n}(t)$) in the
  \ac{FA}. Such that $x^{n}(t) = \frac{x(t) - x(0)}{\max(range(x(t)))}$
  for some state variable $x(t)$, in any given location. The initial and
  final value of the state variables (from assumptions above) are
  translated to 0 and some (possibly symbolic) value $x^{n}(T)$ that
  triggers the outgoing edge if any from the current location at time
  $T$ (c.f. Figure~\ref{fig:hatofa}).
\item Next, we consider a unit circle with centre 0 and radius
  ($r = 1$). We project the normalized variables as a vector rotating in
  the unit circle represented by the polar coordinates
  ($r\angle\theta(t)$). For the running example; the normalised variable
  $x^{n}(t)$ and its corresponding vector representation are shown in
  Figure~\ref{fig:hatofa}. At time $t_{0} = 0$, when entering location
  $L1$, we have $x^{n}(t) = 0$ and its corresponding vector is $\angle0$.

\item Next, comes our \textbf{main insight}: from the definition of
  derivatives
  $\lim_{\Delta t\rightarrow0}\frac{x^{n}(t+\Delta t) - x^{n}(t)}{\Delta t}$, we know that for an
  \textit{infinitesimal change} in time $\Delta t$,
  $\Delta x^{n} = x^{n}(t+\Delta t) - x(t)$ is constant. We capture this
  infinitesimal change as an infinitesimal change in the position of the
  vector on the unit circle. Hence,
  $\Delta x^{n} = \Delta\sin = \sin(\theta_{x} + \Delta\theta_{x}) - \sin(\theta_{x})$, where
  $\Delta\theta_{x} = \theta_{x}(t+ \Delta t) - \theta_{x}(t)$. The example of such a change in
  $x^{n}(t)$ from $t_{0}$ to $t_{1}$ being captured by the change
  $\Delta\sin$ is shown in Figure~\ref{fig:hatofa}, with the movement of the
  vector from its initial angle (0) at time $t_{0}$ to some angle
  $\Delta\theta_{x}$ at time $t_{1}$, such that $\Delta\sin = \Delta x^{n}$.
\item From the above insight we derive the function
  ($\mathcal{L}: \mathbb{R}\rightarrow\mathbb{R}$) capturing the \textit{dynamics} of
  $x^{n}(t)$ as the movement of the vector on the unit circle, i.e.,
  $\dot{\theta}(t) = \mathcal{L}(\dot{x}^{n}(t))$ (Lemma~\ref{lemma:eqstatevar}).
  Function $\mathcal{L}$ relates angular frequency ($\dot{\theta}(t)$) of the vector
  moving on the unit circle to the rate of change of the state variable
  $x^{n}(t)$. \textit{Unlike} Fourier transforms, this angular velocity
  is \textbf{not} constant. This non-constant angular velocity allows us
  to model the dynamics of $x^{n}(t)$ as the \underline{identity}
  $x^{n}(t) = \sin(\theta(t))$ (Lemma~\ref{lemma:eqstatevar}).
\item The above identity allows us to model the \textit{change} in the
  current value of $x^{n}(t)$ and the one satisfying the outgoing edge
  guard: $x^{n}(T) - x^{n}(t)$ as
  $\Delta\sin = \sin(\theta_x(T)) - \sin(\theta_{x}(t))$, which is in turn the change
  in the angle $\Delta\theta = \theta_{x}(T) - \theta_{x}(t)$. For the steering wheel
  example angle $\theta_{x}(T)$ that triggers the outgoing edge guard is
  shown in Figure~\ref{fig:hatofa} along with corresponding $x^{n}(T)$.
\item From non-standard analysis (Theorem~\ref{thm:inc}), angular change
  $\Delta\theta$ is a function ($\mathcal{G}$) of $\dot{\theta}(t)$, i.e.,
  $\Delta\theta = \mathcal{G}(\dot{\theta}(t))$. Hence, from point (5), we have
  $\Delta\theta = \mathcal{G}(\mathcal{L}(\dot{x}^{n}(t)))$.
\end{enumerate}

Steps (1)---(7) translate the dynamics of the continuous state-variable in
any given location of the \ac{HA} to the frequency of the rotating
vector on the unit circle --- the \ac{FA}. We do such a projection from
the time domain to the frequency domain, because we can capture the
level crossings, on out going edge guards, from any given location as
points on the unit circle (c.f. Figure~\ref{fig:hatofa}), which in turn
allows us to \textit{converge} to the level crossings exactly, in very
few simulation steps. This in turn, reduces the simulation time of the
\ac{HA}. The value of the continuous variable that triggers the outgoing
edge guard from any location is translated to points on the unit circle
(e.g., $\theta_{x}(T)$ in Figure~\ref{fig:hatofa} for the outgoing edge guard
from location ${L1}$ in the steering wheel example). Once this point is
determined; we can directly relate the change in the vector angle
$\Delta \theta$ to the level crossing, such that
$\Delta \theta = \theta_{x}(T) - \theta_{x}(t)$. Thereby guaranteeing convergence of the
vector to the level crossing (Lemma~\ref{lemma:guardconverge}). If the
vector converges to the level crossing, so does the state variable in
the \ac{HA}.


\section{Background}
\label{sec:background}

We utilise the machinery of \emph{non-standard analysis} developed by
Keisler~\cite{keisler2013elementary} to represent and manipulate
infinitesimals. Hence, in this section we provide the definitions and
theorems, of non-standard analysis, necessary to understand the rest of
the paper. For a complete treatment of non-standard analysis please
refer to~\cite{keisler2013elementary}.

\paragraph{\textbf{Non-standard analysis}}
\label{sec:non-stand-analys}


Non-standard analysis extends the real number line with the hyperreal
numbers. We start with giving the definition of the hyperreals.

\begin{definition}[Extension Principle]
\label{def:extensionp}
 First we define the \textit{hyperreal numbers}.
  \begin{itemize}
  \item The real numbers ($\mathbb{R}$) form the subset of \textit{hyperreal}
    numbers (${^{*}\mathbb{R}}$), and the order relation
    $x < y\ \forall x, y \in \mathbb{R}$, is a subset of the order relation for hyperreal
    numbers.
  \item There is a hyperreal number ($\epsilon$) that is greater than zero but
    less than every positive real number.
  \item For every real function $f$ of one or more variables we are
    given a corresponding hyperreal function ${^{*}f}$ of the same
    number of variables. Function ${^{*}f}$ is called the natural
    extension of $f$.
  \end{itemize}
\end{definition}


\begin{definition}
  A number $b \in {^{*}\mathbb{R}}$ is said to be:
  \begin{itemize}
  \item \textbf{positive infinitesimal} if $b$ is positive but less than
    every positive real number.
  \item \textbf{negative infinitesimal} if $b$ is negative but greater
    than every negative real number.
  \item \textbf{infinitesimal} if $b$ is either positive infinitesimal,
    negative infinitesimal, or zero.
  \end{itemize}
\end{definition}

\begin{definition}[Transfer Principle]
  \label{def:transfer}
  Every real statement that holds for one or more particular real
  function holds for the hyperreal natural extension of these functions.
\end{definition}

The transfer principle, allows us to state that every real function (or
a combination thereof) extends from the real numbers to the hyperreals
(c.f.\cite{keisler2013elementary}, Chapter-1).

\begin{definition}
  A hyperreal number $b \in {^{*}\mathbb{R}}$ is said to be:
  \begin{itemize}
  \item \textbf{finite} if $b$ is between two real numbers.
  \item \textbf{positive infinite} if $b$ is greater than every real
    number.
  \item \textbf{negative infinite} if $b$ is less than every real
    number.
  \end{itemize}
\end{definition}




\begin{definition}
  Two hyperreal numbers $b$ and $c$ are said to be \textbf{infinitely
    close} to each other (denoted $b \approx c$), if their difference $b-c$ is
  infinitesimal.
\end{definition}



\begin{definition}[Standard Part Principle]
  \label{def:st}
  Every finite hyperreal number is infinitely close to exactly one real
  number.
  Let $b$ be a finite hyperreal number. The \textbf{standard part} of
  $b$ (denoted $\mathrm{st}(b)$), is the real number which is infinitely
  close to $b$.
\end{definition}

We are now ready to give the definition of a derivative in terms of
hyperreal numbers.

\begin{definition}[Derivative]
  \label{def:deriv}
  Let $f(t)$ be a real function of one variable ($t$). The
  \textbf{derivative} of function $f(t)$ (denoted $\dot{f}(t)$) is a new
  function whose value at $t$ is given by
  \[
    \dot{f}(t) = \mathrm{st}\left (\frac{f(t + \Delta t)-f(t)}{\Delta t}\right )
  \] where $\Delta t$ is an infinitesimal and $\Delta t\ne 0$.
\end{definition}


\begin{definition}[Increment]
  Let $y = f(t)$, then $\Delta y = f(x + \Delta t) - f(t)$ is called the increment
  of $y$. Here $y$ is the dependent variable and $t$ is the independent
  variable.
  \label{def:inc}
\end{definition}

Given the increment, we can write the derivative in the short hand
$\dot{y} = \dot{f}(t) = \mathrm{st}(\frac{\Delta y}{\Delta t})$.

\begin{theorem}[Increment Theorem]
  \label{thm:inc}
  Let $y = f(t)$. Suppose $\dot{f}(t)$ exists at a certain $t$, and
  $\Delta t$ is infinitesimal with $\Delta t \neq 0$. Then $\Delta y$ is infinitesimal
  and; $\Delta y = \dot{f}(t)\times\Delta t + \epsilon\times\Delta t$ for some infinitesimal $\epsilon$.
\end{theorem}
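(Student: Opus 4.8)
The plan is to unwind the hyperreal definition of the derivative (Definition~\ref{def:deriv}) and read off the infinitesimal $\epsilon$ directly. Following Definition~\ref{def:inc}, write $\Delta y = f(t+\Delta t) - f(t)$. Since $\dot f(t)$ is assumed to exist, Definition~\ref{def:deriv} tells us two things: first, that $\frac{\Delta y}{\Delta t}$ is a \emph{finite} hyperreal (the standard part is only defined on finite hyperreals, by Definition~\ref{def:st}); and second, that $\dot f(t) = \mathrm{st}\left(\frac{\Delta y}{\Delta t}\right)$, so in particular $\frac{\Delta y}{\Delta t} \approx \dot f(t)$. Here the hypothesis $\Delta t \neq 0$ is exactly what makes the quotient well defined.

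The key step is then simply to \emph{define} $\epsilon := \frac{\Delta y}{\Delta t} - \dot f(t)$. Because $\frac{\Delta y}{\Delta t}$ is infinitely close to $\dot f(t)$, the difference $\epsilon$ is infinitesimal by definition of $\approx$. Multiplying the defining equation for $\epsilon$ through by $\Delta t$ — legal since ${^{*}\mathbb R}$ is a field, an instance of the Transfer Principle (Definition~\ref{def:transfer}) — gives $\Delta y = \dot f(t)\,\Delta t + \epsilon\,\Delta t$, which is precisely the claimed identity.

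It remains to verify that $\Delta y$ itself is infinitesimal. For this I would invoke two elementary closure properties of the infinitesimals (standard facts, see~\cite{keisler2013elementary}): the product of a finite hyperreal and an infinitesimal is infinitesimal, and the sum of two infinitesimals is infinitesimal. Applying the first to $\dot f(t)\,\Delta t$ (with $\dot f(t)$ real, hence finite, and $\Delta t$ infinitesimal) and to $\epsilon\,\Delta t$ (both factors infinitesimal), and then the second to their sum, yields that $\Delta y$ is infinitesimal, completing the proof.

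I do not anticipate a real obstacle: the argument is essentially a one-line rearrangement of Definition~\ref{def:deriv}. The only points requiring care are citing the right arithmetic lemmas for infinitesimals rather than re-deriving them, and being explicit that the finiteness of $\frac{\Delta y}{\Delta t}$ is part of what "the derivative exists" means, so that $\mathrm{st}$ is applicable.
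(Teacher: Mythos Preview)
Your argument is correct and is precisely the standard proof; the paper itself does not prove the statement at all but simply cites Keisler~\cite{keisler2013elementary}, Chapter~2, whose treatment is exactly the rearrangement of the derivative definition that you have written out. There is nothing to add.
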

\begin{proof}
  See~\cite{keisler2013elementary} Chapter-2.
\end{proof}

We are now ready to derive the chain rule in the hyperreals, which will
form the basis of our numerical simulation algorithm.

\begin{theorem}[Chain Rule]
  \label{thm:chain}
  Let $f$ and $G$ be two real functions and we define a new function
  $g(t)$ as $g(t) {\overset {\underset {\mathrm {def} }{}}{=}} G(f(t))$ for some
  independent real variable $t$. Let $x = f(t)$, $y = g(t) = G(x)$, by
  substituting $x\ \mathtt{for}\ f(t)\ in\ G$. At any value of $t$ where
  the derivatives $\dot{f}(t)$ and $\dot{G}(f(t))$ exist. Then
  \[
    \frac{\Delta y}{\Delta t} = \frac{\Delta y}{\Delta x} \times \frac{\Delta x}{\Delta t}
  \]
  \noindent
  where $\Delta y$, $\Delta x$, $\Delta t$ are infinitesimals, and
  $\Delta t \neq 0$,\ $\Delta x \neq 0$.
\end{theorem}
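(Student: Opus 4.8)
The plan is to reduce the statement to simple algebra among the increments, handling the degenerate case $\Delta x = 0$ by hypothesis (the theorem explicitly assumes $\Delta x \neq 0$, so that case is excluded). First I would record the definitions carefully: since $y = G(x)$ and $x = f(t)$, the increment $\Delta x$ is the increment of $x$ corresponding to the infinitesimal increment $\Delta t$ of the independent variable $t$, namely $\Delta x = f(t + \Delta t) - f(t)$; and $\Delta y$ is the increment of $y$ corresponding to this same $\Delta x$, namely $\Delta y = G(x + \Delta x) - G(x)$. The key point is that these are consistent: $G(x + \Delta x) = G(f(t+\Delta t)) = g(t+\Delta t)$, so the $\Delta y$ coming "through $x$" agrees with the $\Delta y$ coming "directly from $t$" via $g$.

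Given that consistency, the identity is almost immediate. With $\Delta t \neq 0$ and $\Delta x \neq 0$ both guaranteed by hypothesis, I would simply write
\[
  \frac{\Delta y}{\Delta t} \;=\; \frac{\Delta y}{\Delta x}\cdot\frac{\Delta x}{\Delta t},
\]
which is a legal cancellation of the nonzero hyperreal $\Delta x$, valid by the Transfer Principle (Definition~\ref{def:transfer}) applied to the real identity $\tfrac{a}{c} = \tfrac{a}{b}\cdot\tfrac{b}{c}$ for $b,c \neq 0$. The only subtlety is ensuring $\Delta x \neq 0$ is really available: one could instead invoke the Increment Theorem (Theorem~\ref{thm:inc}) to write $\Delta x = \dot f(t)\,\Delta t + \varepsilon_1 \Delta t$, which shows $\Delta x$ is infinitesimal, but this does not by itself force $\Delta x \neq 0$ — hence the paper builds that into the statement.

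I expect the main obstacle — really the only place where care is needed — to be the bookkeeping of \emph{which} increment is meant at each occurrence of $\Delta y$ and $\Delta x$, i.e.\ making precise that the increment of the composite function equals the composite of the increments. Once that is pinned down, and once the standard-part step is taken at the end to recover $\dot g(t) = \dot G(f(t))\cdot \dot f(t)$ from $\mathrm{st}\!\left(\tfrac{\Delta y}{\Delta t}\right) = \mathrm{st}\!\left(\tfrac{\Delta y}{\Delta x}\right)\mathrm{st}\!\left(\tfrac{\Delta x}{\Delta t}\right)$ (using that products of standard parts of finite hyperreals equal standard parts of products), the proof is complete. For the purely algebraic identity stated in the theorem, no standard-part step is even required; it is a one-line cancellation once the increments are correctly identified.
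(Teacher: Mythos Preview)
Your proposal is correct but takes a different route from the paper. The paper's proof invokes the Increment Theorem (Theorem~\ref{thm:inc}) applied to $y = G(x)$ to write $\Delta y = \dot G(x)\,\Delta x + \epsilon\,\Delta x$, divides through by $\Delta t$, factors to $(\dot G(x)+\epsilon)\cdot\frac{\Delta x}{\Delta t}$, and then recognises $\dot G(x)+\epsilon = \frac{\Delta y}{\Delta x}$ (again via the Increment Theorem) to obtain the stated identity. You instead observe that, once the increments are correctly identified so that the same $\Delta y$ appears on both sides, the identity is a one-line hyperreal cancellation $\frac{\Delta y}{\Delta t} = \frac{\Delta y}{\Delta x}\cdot\frac{\Delta x}{\Delta t}$ licensed by Transfer and the hypothesis $\Delta x \neq 0$. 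Your argument is more elementary and makes explicit the bookkeeping step (that $G(x+\Delta x) = g(t+\Delta t)$) which the paper leaves implicit; it also makes clear that the differentiability hypotheses are not actually needed for the bare increment identity. The paper's route, by contrast, surfaces $\dot G(x)$ explicitly and thereby sits closer to the derivative-level chain rule, which is how the result is used downstream in Lemma~\ref{lemma:eqstatevar}.
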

\begin{proof}
  From Increment Theorem~\ref{thm:inc}, we have
  $\Delta y = \dot{G}(x)\Delta x + \epsilon \times \Delta x$. Dividing both sides by
  $\Delta t$, we have
  \mbox{$\frac{\Delta y}{\Delta t} = \frac{\dot{G}(x)\Delta x + \epsilon \times \Delta x}{\Delta t}$}.
  Therefore,
  \mbox{$\frac{\Delta y}{\Delta t} = (\dot{G}(x)+ \epsilon)\times\frac{\Delta x}{\Delta t}$}. Finally,
  again applying the increment Theorem~\ref{thm:inc}, we have
  \mbox{$\frac{\Delta y}{\Delta t} = \frac{\Delta y}{\Delta x}\times\frac{\Delta x}{\Delta t}$}.
\end{proof}


\section{\acf{HA} and \acf{FA} --- syntax and semantics}
\label{sec:ha-syntax-semantics}

We convert a standard \acf{HA} to a \acf{FA} for execution
(Section~\ref{sec:prop-simul-techn}). This translation should be sound,
i.e., for every execution trace of the \ac{HA} there should be an
equivalent execution trace of the \ac{FA}. In this section we formalize
the soundness theorem of \ac{HA} and \ac{FA} executions.

We start with giving the syntax and execution trace semantics of the
\ac{HA} followed by the syntax and execution trace semantics of the
\ac{FA}. In this process we will use the steering wheel running example
(Section~\ref{sec:runn-example-probl}) to formalize the translation from
the \ac{HA} to the \ac{FA}.

\begin{definition}[\textbf{Syntax of \ac{HA}}]
  \label{def:ha}
  A \ac{HA} $\mathcal{H}$ is a tuple $\langle L, X, Init, f, Inv, E, G, R \rangle$, where:
  \begin{itemize}
    \setlength\itemsep{1pt}
  \item $L$ is a set of discrete locations with domain $\mathbf{L}$.
  \item $X$ is a finite collection of continuous variables, with its
    domain represented as $\mathbf{X} = \mathbb{R}^n$. 
  \item $Init \subseteq \{l_0\} \times \mathbf{X}$
    such that there is exactly one $l_0 \in \mathbf{L}$, is the
    singleton initial location.
  \item $f : \mathbf{L} \times \mathbf{X} \rightarrow \bm{X}$ is a vector field. Each
    flow in the vector field is only of the form $f(x(t))$.
  \item $h: \mathbf{L} \times \mathbf{X} \rightarrow \bm{X}$ is an update function.
  \item $Inv: \mathbf{L} \rightarrow 2^{\mathbf{X} }$ assigns
    to each $l \in \mathbf{L}$ an invariant set.
  \item $E \subset \mathbf{L} \times \mathbf{L} $ is a collection of discrete edges.
  \item $G : E \rightarrow 2^{\mathbf{X}}$ assigns to
    each $e = (l, l') \in E$ a guard.
  \item $R : E \times \mathbf{X} \rightarrow \mathbf{X}$ assigns to each
    $e = (l,l') \in E$, $x \in \mathbf{X}$ a reset relation.
  \end{itemize}
\end{definition}


\begin{definition}
  \label{def:guards}
  The edge guards $G$ are of the form:
  $g {\overset {\underset {\mathrm {def} }{}}{=}} x \bowtie \mathbb{Q}$. Where
  $x \in X$ is a continuous variable, $\bowtie\ \in \{\geq, \leq\}$.
\end{definition}

We can map the pictorial representation of the steering wheel \ac{HA} in
Figure~\ref{fig:stwheel} to the formal definition as follows:
\textcircled{1} $L=\{L1, L2\}$.\@ \textcircled{2} $X = \{x(t), y(t)\}$
and $\mathbf{X} = \mathbb{R}^{2}$.\textcircled{3}
$Init = \{L1\} \times \{\pi/2\} \times \{1\}$. Formally, the \acp{ODE} are
represented as vector flows
$f(L1, x(t), y(t)) {\overset {\underset {\mathrm {def} }{}}{=}}
{[\dot{x}(t), \dot{y}(t)]}' = {[0.1, 0]}'$.\@ \textcircled{4} An example
update function for location $L1$ is
$h(L1, x(t), y(t)) {\overset {\underset {\mathrm {def} }{}}{=}} {[x(t),
  y(t)]}' = {[x(t), \cos(x(t))]}'$. \@ \textcircled{5} Invariants are
used as \emph{fairness conditions} to enable an exit from any location
as soon as the invariants become false. They also restrict all
continuous variables to obey the invariant conditions, while the
execution remains in a given location i.e.\@ the invariant for the
location \textbf{L1} is $y(t) \leq -0.99$.\@ \textcircled{6} There are two
edges $e1 = (L1, L2)$ and $e2 = (L2, L1)$ connecting the two
locations.\@ \textcircled{6} An example edge guard
$G(e1) = y(t) \leq -0.99$. \textcircled{7} Finally, reset conditions like
the location updates ($h$) are given as
$R(e_{1}, x(t), y(t))\ {\overset {\underset {\mathrm {def} }{}}{=}}\
{[x'(t), y'(t)]}' := {[x(t), y(t)]}'$, where $x'(t)$, $y'(t)$ give the
updated values of the continuous variables when taking the
edge.\footnote{[]' represents vector transpose.}


\begin{definition}[\textbf{Semantics of \ac{HA}}]
  The semantics of any \ac{HA} is the set of all possible execution
  traces, where a valid execution trace is formalised below. An
  execution trace of \ac{HA} $\mathcal{H}$ is the tuple
  $x_{t} = (v(t), \bm{X}(t))$, if there exists a sequence of stopping
  times $T_{0} = 0 \leq T_{1} \leq T_{2} \leq T_{k},\ldots$ such that, for each
  $k \in \mathbb{N}$, and $T_{k} \in \mathbb{R}^{\geq 0}$ and
  $v(t) \in \bm{L}$, and finally, $\bm{X}(t)$ are the state variables of
  the \ac{HA}:
  \begin{itemize}

  \item $x_{0} = (v(0), \bm{X}(0))$, satisfies the initial condition
    $Init$.

  \item For $t \in [T_{k}, T_{k+1})$, $v(t) = v(T_{k})$ is constant and
    $\bm{X}(t)$ is a solution to the \ac{ODE} satisfying $flow(v(t))$.
    Moreover, $x(t) \in Inv(v(t))$.

  \item For \emph{some} outgoing edge of location $v(T_{k})$, i.e.,
    $\exists e \in E, s.t., e = (v(T_{k}), v(T_{k+1}))$: there, exists
    $T_{k+1} = T_{k} + S^{k}_{v(t)}$, where
    $S^{k}_{v(t)}= \min(t_{*}, t'_{*},\ldots)$. Here, any $t_{*},\ldots$ indicates
    the first time instants for \emph{all} outgoing edge guards for
    location $v(T_{k})$.
    
  \item Computing stop times $t_{*}$ requires that the \acp{ODE} evolve
    their respective continuous variables towards the satisfaction of
    edge guard predicates.

  \item If multiple outgoing edges, from any given location $v(T_{k})$,
    can be taken at the same instant, then we choose an outgoing edge
    randomly.

  \item The valuation $\bm{X}(T_{k+1})$ is governed by $R(e, x)$, where
    $e \in E = (v(T_{k}), v(T_{k+1}))$.
  \end{itemize}
  \label{def:hasemantics}
\end{definition}


\begin{definition}[\textbf{Syntax of \ac{FA}}]
  For a given \ac{HA} $\mathcal{H}$ the corresponding \acf{FA} $\mathcal{F}$ is a tuple:
  $\langle L_{f}, X_{f}, X^{n}_{f}, \Theta_{f}, Init_{f}, \Omega_{f}, Inv_{f}, E_{f},
  G_{f}, R_{f} \rangle$. Where:
  \begin{itemize}
  \item $L_{f} = L$; \ac{FA} has the same number of locations.
  \item $X_{f} = X$; all continuous state variables are carried over to
    the \ac{FA}.
  \item $Init_{f} = Init$; the initial of state variables is carried over
    from the \ac{HA}.
  \item
    $X^{n}_{f}: \bm{L}_{f} \times \bm{X}_{f} \rightarrow \mathbb{R}^{n}$, \textrm{such that}
    $x^{n}_{f}(l, x_{f}(t)) = \frac{(x_{f}(t) -
      x_{f})}{\max(\mathrm{range}(l, x_{f}(t)))}\ \forall x \in X^{n}_{f}, \forall l \in
    \mathbf{L}_{f}$, we introduce the normalized version of all
    continuous variables, where $x_{f}$ is the initial value of the
    state variable when it enters location $l$. Furthermore,
    $\mathrm{range}(l, x_{f}(t))$ is the range of all values that the
    variable $x_{f}(t)$ can take in the location $l$ during execution.
  \item $\Theta_{f} \in \mathbb{R}^{n}$ are the vector angle projections for each state
    variable on the unit circle (c.f.
    Section~\ref{sec:prop-simul-techn}).
  \item
    $\Omega_{f} : \bm{L} \times \bm{X^n_{f}} \rightarrow \bm{\Theta}^{n}_{f}$; are the angular
    velocity (angular frequency) of the individual vectors projected
    onto the unit circle.
  \item $Inv_{f} = Inv$, the invariant set for each location are carried
    over from the \ac{HA}.
  \item $E_{f} = E$, we have the same number of edges in the \ac{FA} as
    the \ac{HA}.
  \item $G_{f} : E \times 2^{\bm{X}_{f}} \rightarrow \Theta_{f}$ translates the guards from
    the \ac{HA} to an angle of the vector in the \ac{FA}.
  \item
    $R_{f} : E \times \{\bm{X}_{f} \cup \bm{\Theta}_{f} \cup \bm{X}^{n}_{f}\} \rightarrow
    \{\bm{X}_{f} \cup \bm{\Theta}_{f} \cup \bm{X}^{n}_{f}\}$ reset relations update
    all the variables in the \ac{FA}.
  \end{itemize}
  \label{def:fa}
\end{definition}

For the steering wheel running example \ac{HA} (c.f.
Definition~\ref{def:ha}), we can translate it to the \ac{FA} as follows:

\begin{align*}
  L_{f} {\overset {\underset {\mathrm {def} }{}}{=}} \{{L1}, {L2}\},\ 
  X_{f} {\overset {\underset {\mathrm {def} }{}}{=}} [x_{f}(t), y_{f}(t)]' =
  {[x(t), y(t)]}' \\
  X^{n}_{f}({L1},x_{f}(t), y_{f}(t)) {\overset {\underset {\mathrm {def} }{}}{=}}
  {[x^{n}_{f}(t), y^{n}_{f}(t)]}' = {[(x(t)-\pi/2)/0.99, y(t)-\cos(\pi/2)]}' \\
  \Theta_{f} {\overset {\underset {\mathrm {def} }{}}{=}} [\theta_{x}(t), \theta_{y}(t)]',\ 
  Init_{f}{\overset {\underset {\mathrm {def} }{}}{=}} \{L1\} \times \{\pi/2\} \times \{1\}\\
  \Omega_{f}({L1}) {\overset {\underset {\mathrm {def} }{}}{=}}
  [\dot{\theta}_{x}(t), \dot{\theta}_{y}(t)]' = [D(\dot{x}^{n}(t)), 0]',\ 
  Inv_{f}({L1}){\overset {\underset {\mathrm {def} }{}}{=}} \{\neg(y(t) \leq -0.99)\},\ 
  E_{f}{\overset {\underset {\mathrm {def} }{}}{=}} \{({L1}, {L2}), ({L2}, {L1})\} \\
  G_{f}({L1}, {L2}, x_{f}(t), y_{f}(t))
  {\overset {\underset {\mathrm {def} }{}}{=}}
  [\theta_{x}(t)] = [\arcsin(\frac{\arccos(-0.99)-\pi/2}{\max(range(x_{f}(t)))})] \\
  R_{f}({L1}, {L2}) {\overset {\underset {\mathrm {def}
  }{}}{=}} {[x_{f}'(t), y'_{f}(t), \theta'_{x}(t), \theta'_{y}(t), x'^{n}_{f}(t), y'^{n}_{f}(t)]}^{'} =
  {[x_{f}(t), y_{f}(t),0, 0, 0, 0]}^{'}
\end{align*}

Note in particular that the normalization of variables ($X^{n}_{f}$) can
be carried out statically, without solving the \ac{HA}. In order to
normalize any variable, we subtract the initial value that the variable
will take when entering any location, and then divide it with maximum of
all the different values that the variable can take when taking outgoing
edges and invariants of the location. For example, in location ${L1}$,
the normalized variable $x_{f}^{n}(t)$ can be calculated by:
\textcircled{1} subtracting the initial value $\pi/2$. \textcircled{2} Now
since the state variable $x_{f}(t)$ goes from $\pi/2$ to
$abs(\arccos(-0.99))$ in location $L1$, we can divide the resultant
value with $\max([\pi/2, abs(\arccos(-0.99))])$. We can always recompute
the value of original state variable by just multiplying the normalized
variable with the level-crossing and adding the initial value. The
initial value of the variable when it enters any location and its range
for a given location are known statically, because of
Definition~\ref{def:guards}.

Finally, note that the \acp{ODE} evolving the angular frequency are a
function ($D(x^{n}_{f}(t))$) of the normalized variables. We will show
later on that
$D(x^{n}_{f}(t)) {\overset {\underset {\mathrm {def} }{}}{=}}
\frac{\dot{x}^{n}_{f}(t)}{\cos(\theta_{x}(t))}$ for any normalized variable
$x^{n}_{f}(t)$, where $\cos(\theta_{x}(t)) \neq 0$.

\begin{definition}[\textbf{Semantics of \ac{FA}}]
  An execution trace of \ac{FA} $\mathcal{F}$ is
  $x_{ft} = (v_{f}(t), \bm{X}_{f}(t))$, if there exists a sequence of
  stopping times $T_{f0} = 0 \leq T_{f1} \leq T_{f2} \leq T_{fk},\ldots$ such that,
  for each $k \in \mathbb{N}$, and $T_{fk} \in \mathbb{R}^{\geq 0}$, $v_{f}(t) \in \bm{L}_{f}$:
  \begin{itemize}

  \item $x_{f0} = (v_{f}(0), \bm{X}_{f}(0))$, satisfies the initial
    condition $Init_{f}$.

  \item For $t \in [T_{fk}, T_{fk+1})$, $v_{f}(t) = v_{f}(T_{fk})$ is
    constant and $\bm{X}_{f}(t)$ is a solution to the \ac{ODE}
    satisfying $\Omega_{f}(v_{f}(t))$. Moreover,
    $x_{f}(t) \in Inv_{f}(v_{f}(t))$.

  \item For \emph{some} outgoing edge of location $v_{f}(T_{fk})$, i.e.,
    $\exists e_{f} \in E_{f}, s.t., e_{f} = (v_{f}(T_{fk}), v_{f}(T_{fk+1}))$:
    there, exists $T_{fk+1} = T_{fk} + S^{k}_{fv_{f}(t)}$, where
    $S^{k}_{fv_{f}(t)}= \min(t_{f*}, t_{f*}',\ldots)$. Here, any
    $t_{f*},\ldots$ indicates the first time instants for \emph{all} outgoing
    edge guards for location $v_{f}(T_{fk})$.
    
  \item Computing stop times $t_{f*}$ requires that the angular
    frequency evolve their respective continuous variables towards the
    satisfaction of edge guard predicates.

  \item If multiple outgoing edges, from any given location
    $v_{f}(T_{fk})$, can be taken at the same instant, then we choose an
    outgoing edge randomly.

  \item The valuation $\bm{X}_{f}(T_{fk+1})$ is governed by
    $R_{f}(e_{f}, x_{f})$, where
    $e_{f} \in E_{f} = (v_{f}(T_{fk}), v_{f}(T_{fk+1}))$.
  \end{itemize}
  \label{def:fasemantics}
\end{definition}

\subsection{Equivalence of the semantics of \acf{HA} and \acf{FA}}
\label{sec:equiv-acha-acfa}

In this section we give the proof of soundness of our translation scheme
from \ac{HA} to \ac{FA}. The primary idea is to give an equivalence
relation between the traces of \ac{HA} and \ac{FA}. The equivalence
relation ($\approxeq$) states that: \textcircled{1} the initial, at time
$t=0$, trace of both the executions are equal. \textcircled{2} Both
\ac{FA} and \ac{HA} take an outgoing edge from any given location at the
same time. \textcircled{3} When they both take an outgoing edge; the
source and destination locations are the same, along with the values of
the state variables. We give the formal statement along with the
associated proof next.

\begin{theorem} Let the trace of a well-formed \ac{HA} $\mathcal{H}$ be given by
  $\llbracket \mathcal{H} \rrbracket$ according to
  Definition~\ref{def:hasemantics}. Similarly, the trace of a translated
  \ac{FA} $\mathcal{F}$, be given by $\llbracket \mathcal{F} \rrbracket$ according to
  Definition~\ref{def:fasemantics}.
  $\llbracket \mathcal{H} \rrbracket \approxeq \llbracket \mathcal{F} \rrbracket$, if
  \begin{enumerate}
  \item $(v(0), \bm{X}(0)) = (v_{f}(0), \bm{X}_{f}(0))$
  \item For all stop times
    $T_{k}\ \mathrm{of}\ \llbracket \mathcal{H} \rrbracket$, for all stop times
    $T_{fk'}\ \mathrm{of}\ \llbracket \mathcal{F} \rrbracket$,
    $T_{k} = T_{fk'}$ and $k = k'$.
  \item For all stop times
    $T_{k}\ \mathrm{of}\ \llbracket \mathcal{H} \rrbracket$, for all stop times
    $T_{fk'}\ \mathrm{of}\ \llbracket \mathcal{F} \rrbracket$,
    $v(T_{k}) = v_{f}(T_{fk'})$, and
    $\bm{X}(T_{k}) = \bm{X}_{f}(T_{fk'})$.
  \end{enumerate}
  \label{fahaeq}
\end{theorem}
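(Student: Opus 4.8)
The plan is to prove $\llbracket \mathcal{H} \rrbracket \approxeq \llbracket \mathcal{F} \rrbracket$ by induction on the index $k$ of the stopping times, establishing conditions (1)--(3) simultaneously. For the base case ($k=0$), the claim at the initial instant is immediate from $Init_{f}=Init$, $L_{f}=L$ and $X_{f}=X$ in Definition~\ref{def:fa}: the initial location and state valuation are copied verbatim, giving condition (1) and the $k=0$ instances of (2) and (3). The inductive hypothesis is that at a common stopping time $T_{k}=T_{fk}$ we have $v(T_{k})=v_{f}(T_{fk})=:l$ and $\bm{X}(T_{k})=\bm{X}_{f}(T_{fk})$; we must exhibit the next common stopping time and re-establish the hypothesis there.

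The first step of the inductive step is to show the continuous evolutions agree on $[T_{k},T_{k+1})$. For each state variable $x$, the reset $R_{f}$ zeroes $x^{n}$ and $\theta_{x}$ on entry to $l$, so $x^{n}(T_{k})=0=\sin(\theta_{x}(T_{k}))$. Using $\Omega_{f}(l)$ with $\dot\theta_{x}(t)=D(\dot x^{n}(t))=\dot x^{n}(t)/\cos(\theta_{x}(t))$ together with the Chain Rule (Theorem~\ref{thm:chain}) and the Increment Theorem (Theorem~\ref{thm:inc}), one computes $\frac{d}{dt}\sin(\theta_{x}(t))=\cos(\theta_{x}(t))\,\dot\theta_{x}(t)=\dot x^{n}(t)$; hence $\sin(\theta_{x}(t))$ and $x^{n}(t)$ solve the same \ac{ODE} from the same initial value and coincide throughout $l$ (Lemma~\ref{lemma:eqstatevar}). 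Inverting the normalisation, $x(t)=x^{n}(t)\cdot\max(\mathrm{range}(l,x))+x_{f}$, and using the inductive hypothesis on the entry value, shows the \ac{FA}'s reconstructed trajectory equals the \ac{HA}'s solution of $\dot x=f(x)$; this in turn makes the invariant clause $x_{f}(t)\in Inv_{f}=Inv$ hold exactly when it does for the \ac{HA}. Well-formedness (switching behaviour, plus the static knowledge of entry values, guard values and ranges from Definitions~\ref{def:guards}--\ref{def:fa}) is used here to guarantee $|x^{n}(t)|<1$, hence $\theta_{x}(t)\in(-\tfrac{\pi}{2},\tfrac{\pi}{2})$, $\cos(\theta_{x}(t))\neq 0$ (so $D$ is well defined), and $\theta_{x}$ monotone in $t$ while in $l$; variables defined by updates $h$ (e.g.\ $y=\cos(x)$) follow since they are functions of the ones with nontrivial flow.

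Next I would match the discrete switch. Each \ac{HA} guard $x\bowtie q$ is translated by $G_{f}$ to the angle condition $\theta_{x}(t)=\arcsin\!\big((q-x_{f})/\max(\mathrm{range}(l,x))\big)$; by the previous step, $x(t)\bowtie q$ holds iff $\sin(\theta_{x}(t))\bowtie(q-x_{f})/\max(\mathrm{range})$, i.e.\ iff the vector has reached that angle, and by monotonicity of $\theta_{x}$ on $[T_{k},T_{k+1})$ the \emph{first} such time instant is the same for the \ac{HA} and the \ac{FA} (Lemma~\ref{lemma:guardconverge}); in particular the vector cannot wrap past the crossing before a $2\pi$ revolution, which is the key point. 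Taking the minimum over all outgoing edges of $l$ gives $S^{k}_{v(t)}=S^{k}_{fv_{f}(t)}$, hence $T_{k+1}=T_{f(k+1)}$ with the index synchronised --- condition (2). The same edge $e\in E_{f}=E$ becomes enabled (coupling the random choice in the tie-breaking clause of Definitions~\ref{def:hasemantics} and~\ref{def:fasemantics}), so $v(T_{k+1})=v_{f}(T_{f(k+1)})$; and $R$ and $R_{f}$ agree on $\bm{X}_{f}$ (both carry the state variables over, with $R_{f}$ additionally re-zeroing $\theta,x^{n}$ for the next location), giving $\bm{X}(T_{k+1})=\bm{X}_{f}(T_{f(k+1)})$ --- condition (3). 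This closes the induction.

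I expect the main obstacle to be the continuous-evolution step, specifically establishing that $\theta_{x}$ stays in a range where $\cos(\theta_{x})\neq 0$ and is monotone so that the identity $x^{n}(t)=\sin(\theta_{x}(t))$ is honest and the crossing is hit exactly once; these facts must be squeezed out of the well-formedness assumption and the static range bound, and the underlying zero-crossing/monotonicity argument (Lemma~\ref{lemma:guardconverge}) is the genuinely new content, with the remainder being bookkeeping around the Chain Rule and the Increment Theorem.
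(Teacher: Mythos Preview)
Your proposal is correct and follows essentially the same route as the paper: induction on the stop-time index $k$, with the base case coming from $Init_{f}=Init$ and the inductive step supplied by Lemma~\ref{lemma:eqstatevar} (trajectory equality on $[T_{k},T_{k+1})$) together with Lemma~\ref{lemma:guardconverge} (first guard-crossing times coincide, so $S^{k}_{v(t)}=S^{k}_{fv_{f}(t)}$). You are in fact more careful than the paper on two points it leaves implicit---that $\cos(\theta_{x}(t))\neq 0$ on the interval (so $D$ is well defined) and that the random tie-breaking must be coupled---but the overall architecture is the same.
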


\begin{proof}
  First let us assume that \ac{HA} and \ac{FA} are in the same location,
  such that $v(T_{k}) = v(T_{fk})$ and $T_{k} = T_{fk}$, and
  $\bm{X}(T_{k}) = \bm{X}_{f}(T_{fk})$ for some stop times
  $T_{k} = T_{fk} \geq 0$. Now using Lemma~\ref{lemma:eqstatevar}, we know
  that \underline{equality}:
  $x(t) = \max(\mathrm{range}(x(t))) \times \sin(\theta(t)) + x(T_{k}) = x_{f}(t),
  \forall t \in [T_{k}, T_{k+1})$, holds for any state variable $x(t)$ in
  \ac{HA} and $x_{f}(t)$ in \ac{FA}. Next, using
  Lemma~\ref{lemma:guardconverge}, we know that any state variable
  $x_{f}(t)$ \textbf{converges} to the value that satisfies any outgoing
  edge guard from location $v_{f}(t)$. Hence, we can get
  $S^{k}_{fv_{f}(t)}$, which is the minimum from amongst all stop times
  ($t_{f*}$) satisfying the outgoing edge guard from location
  $v_{f}(t)$. Consequently, from the equality above, we have the
  \underline{step}: $T_{k+1} = T_{fk+1} = T_{k} + S^{k}_{fv_{f}(t)}$ and
  $\bm{X}(T_{k+1}) = \bm{X}_{f}(T_{fk+1})$.

  Finally, the first point (1) in the statement of the theorem holds
  trivially by the definition of \ac{HA} and \ac{FA}
  ($Init_{f} = Init$). Then using induction on the step above we can
  show that points (2) and (3) hold. Hence, the equivalence relation
  $\llbracket\mathcal{H}\rrbracket\approxeq\llbracket\mathcal{F}\rrbracket$ holds.
\end{proof}

\begin{lemma}[Equivalence relation of state-variables]
  \label{lemma:eqstatevar}
  Let $x(t)$ be a state variable in \ac{HA}. Let $x_{f}(t)$ be the
  representative state variable in \ac{FA} with its normalization
  $x^{n}_{f}(t)$ (Definition~\ref{def:fa}). If
  $\Delta x^{n}_{f} = \Delta \sin$, then
  $x(t) = x_{f}(t) = \max(\mathrm{range}(x(t)))\times\sin(\theta_{x}(t))+x(T_{k})$
  for some stop time $T_{k}$ when the \ac{HA} and \ac{FA} enter the some
  location, such that, $v(T_{k}) = v_{f}(T_{fk})$, $T_{k} = T_{fk}$, for
  all $t \in [T_{k}, T_{k+1})$.
\end{lemma}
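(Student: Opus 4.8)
The plan is to prove the lemma in two moves: first establish the pointwise identity $x^{n}_{f}(t) = \sin(\theta_{x}(t))$ throughout $[T_{k}, T_{k+1})$, and then unfold the definition of the normalized variable (Definition~\ref{def:fa}) to read off the stated closed form for $x_{f}(t)$ and identify it with $x(t)$. For the base case, note that whenever the \ac{FA} enters the location $v_{f}(T_{fk})$ --- through $Init_{f}$ or through a reset $R_{f}$ --- Definition~\ref{def:fa} forces the centred quantity $x^{n}_{f}(T_{fk}) = \tfrac{x_{f}(T_{fk}) - x_{f}}{\max(\mathrm{range})}$ to vanish (since $x_{f} = x_{f}(T_{fk})$), and the associated vector is initialised at angle $\theta_{x}(T_{fk}) = 0$ (c.f. the reset $R_{f}$ and the initial projection in Section~\ref{sec:prop-simul-techn}). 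Since by hypothesis $T_{k} = T_{fk}$, $v(T_{k}) = v_{f}(T_{fk})$ and $\bm{X}(T_{k}) = \bm{X}_{f}(T_{fk})$, we also have $x_{f}(T_{k}) = x(T_{k})$; hence $x^{n}_{f}(T_{k}) = 0 = \sin(0) = \sin(\theta_{x}(T_{k}))$.

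\textbf{Propagation over the interval.} Set $e(t) = x^{n}_{f}(t) - \sin(\theta_{x}(t))$. Writing $\Delta\theta_{x} = \theta_{x}(t+\Delta t) - \theta_{x}(t)$, the hypothesis $\Delta x^{n}_{f} = \Delta\sin$ reads $x^{n}_{f}(t+\Delta t) - x^{n}_{f}(t) = \sin(\theta_{x}(t) + \Delta\theta_{x}) - \sin(\theta_{x}(t)) = \sin(\theta_{x}(t+\Delta t)) - \sin(\theta_{x}(t))$ for every infinitesimal $\Delta t \neq 0$, i.e. $\Delta e = 0$. By the definition of the derivative (Definition~\ref{def:deriv}) this gives $\dot e(t) = \mathrm{st}(\Delta e/\Delta t) = 0$ at every $t \in [T_{k}, T_{k+1})$ --- the Increment Theorem~\ref{thm:inc} guarantees $\dot e$ exists, so the standard part is well defined --- and a real function with identically zero derivative on an interval is constant there (equivalently, $e(t) = e(T_{k}) + \int_{T_{k}}^{t}\dot e = e(T_{k})$). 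Combined with $e(T_{k}) = 0$, this yields $x^{n}_{f}(t) = \sin(\theta_{x}(t))$ for all $t \in [T_{k}, T_{k+1})$.

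\textbf{Closed form and identification of the two variables.} Substituting $x^{n}_{f}(t) = \tfrac{x_{f}(t) - x_{f}(T_{k})}{\max(\mathrm{range}(v_{f}(t), x_{f}(t)))}$ into the identity and solving for $x_{f}(t)$ gives $x_{f}(t) = \max(\mathrm{range}(x(t)))\sin(\theta_{x}(t)) + x(T_{k})$. Since $X_{f} = X$ and $Init_{f} = Init$, $x_{f}$ is the continuous variable $x$ carried over to the \ac{FA} and the two agree at $T_{k}$; to confirm that the \ac{FA} dynamics preserve this, differentiate the closed form by the Chain Rule (Theorem~\ref{thm:chain}) to get $\dot x_{f}(t) = \max(\mathrm{range})\cos(\theta_{x}(t))\,\dot\theta_{x}(t)$, then substitute the \ac{FA} angular velocity $\dot\theta_{x}(t) = \Omega_{f}(v_{f}(t)) = D(\dot x^{n}_{f}(t)) = \dot x^{n}_{f}(t)/\cos(\theta_{x}(t))$ (defined wherever $\cos(\theta_{x}(t)) \neq 0$) together with $\dot x^{n}_{f}(t) = \dot x_{f}(t)/\max(\mathrm{range})$; this collapses to the flow constraint $\dot x_{f}(t) = f(x_{f}(t))$ carried over from $\mathcal{H}$. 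As $x$ solves the same autonomous initial value problem with the common initial value $x(T_{k})$, well-formedness of $\mathcal{H}$ (uniqueness of solutions) forces $x(t) = x_{f}(t) = \max(\mathrm{range}(x(t)))\sin(\theta_{x}(t)) + x(T_{k})$, which is the claim.

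\textbf{Main obstacle.} The delicate step is the propagation argument --- lifting the \emph{pointwise} infinitesimal identity $\Delta x^{n}_{f} = \Delta\sin$ to a \emph{global} equality on $[T_{k}, T_{k+1})$. This relies on the Transfer Principle to ensure $\Delta e = 0$ at every hyperreal $t$ with infinitesimal $\Delta t$ (not only at standard reals), on the Increment Theorem~\ref{thm:inc} to ensure $\dot e$ exists so that passing to standard parts is legitimate, and on the elementary fact that a vanishing derivative on an interval forces a constant. A secondary subtlety is the exclusion $\cos(\theta_{x}(t)) = 0$, where $D$ and hence $\Omega_{f}$ are undefined: one must argue that for the switching subclass of \ac{HA} considered here the level crossing is reached before the rotating vector completes a quarter turn, so these instants do not occur in $[T_{k}, T_{k+1})$, or otherwise remove them by continuity.
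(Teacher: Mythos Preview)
Your proof is correct and reaches the same identity $x^{n}_{f}(t)=\sin(\theta_{x}(t))$ by a genuinely more direct route than the paper's. Where you introduce $e(t)=x^{n}_{f}(t)-\sin(\theta_{x}(t))$ and read the hypothesis $\Delta x^{n}_{f}=\Delta\sin$ as $\Delta e=0$, hence $\dot e\equiv 0$ and $e\equiv e(T_{k})=0$, the paper instead expands $\Delta\sin/\Delta t$ via the Chain Rule (Theorem~\ref{thm:chain}) and the Increment Theorem (Theorem~\ref{thm:inc}), takes standard parts to obtain the differential relation $\dot x^{n}_{f}(t)=\cos(\theta_{x})\,\dot\theta_{x}(t)$, and then integrates by $u$-substitution, fixing the constant of integration from the normalised initial data. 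Your argument is shorter and avoids the non-standard chain-rule bookkeeping; the paper's longer route has the compensating advantage that it produces, as an explicit intermediate, the relation $\dot\theta_{x}(t)=\dot x^{n}_{f}(t)/\cos(\theta_{x}(t))$ (Equation~(\ref{eq:xtheta})), which is exactly what is invoked downstream in the proof of Lemma~\ref{lemma:guardconverge} and in Algorithm~\ref{alg:guard_step}. Your final identification $x(t)=x_{f}(t)$ via uniqueness of solutions to the autonomous IVP is more careful than the paper's one-line appeal to ``Definition~\ref{def:fa} and assumption''; the only caveat is that, as you have written it, substituting $\dot x^{n}_{f}=\dot x_{f}/M$ back into $\dot x_{f}=M\cos(\theta_{x})\,\dot\theta_{x}$ collapses to a tautology rather than to $\dot x_{f}=f(x_{f})$ --- the intended fix is to read the $\dot x^{n}_{f}$ appearing in $D(\dot x^{n}_{f})$ as the normalised \emph{HA} flow $f(\cdot)/M$ carried over from $\mathcal{H}$, after which the collapse is genuine.
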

\begin{proof}
  Consider the vector representing the variable $x_{f}(t)$ on the unit
  circle (c.f. Section~\ref{sec:prop-simul-techn}). This vector has an
  angle of $\theta_{x}(t)$. It's value on the Y-axis of the unit circle is
  given by $\sin(\theta_{x}(t))$. For an infinitesimal change
  $\Delta \sin$; from Theorem~\ref{thm:chain}, we have
  $\frac{\Delta \sin}{\Delta t} = \frac{\Delta \sin}{\Delta \theta_{x}} \times \frac{\Delta \theta_{x}}{\Delta t}$.
  Then from assumption $\Delta x^{n}_{f} = \Delta \sin$, we have:
  \begin{align*}
    \frac{\Delta x^{n}_{f}}{\Delta t} &= \frac{\Delta \sin}{\Delta \theta_{x}} \times \frac{\Delta \theta_{x}}{\Delta t} \implies
                              (\dot{x}^{n}_{f}(t) +  \epsilon_{x}) = (\dot{\sin}(\theta_{x}) + \epsilon_{\theta_{x}}) \times (\dot{\theta_{x}} + \epsilon_{t}) (\mathrm{from\ Theorem~\ref{thm:inc}})\\
    \therefore (\dot{x}^{n}_{f}(t) +  \epsilon_{x}) &= \cos(\theta_{x}) \dot{\theta_{x}}(t) +
                                      (\cos(\theta_{x}) \epsilon_{t} + \dot{\theta_{x}}\epsilon_{\theta_{x}} + \epsilon_{\theta_{x}}\epsilon_{t}) (\mathrm{By\ Definitions~\ref{def:extensionp}\ and~\ref{def:transfer}})\\
    \therefore st(\dot{x}^{n}_{f}(t) +  \epsilon_{x}) &= st(\cos(\theta_{x}) \dot{\theta_{x}}(t) +
                                        (\cos(\theta_{x}) \epsilon_{t} + \dot{\theta_{x}}\epsilon_{\theta_{x}} + \epsilon_{\theta_{x}}\epsilon_{t})) (\mathrm{By\ Definition~\ref{def:st}})\\
    \therefore \dot{x}^{n}_{f}(t) &= \cos(\theta_{x}) \dot{\theta_{x}}(t) \implies
                           \int\dot{x}^{n}_{f}(t) dt = \int \cos(\theta_{x}) \dot{\theta_{x}}(t) dt \\
    \therefore x^{n}_{f}(t) &= \sin(\theta_{x}(t)) + C (\textrm{By\ u-substitution}) \numberthis \label{eq:xtheta}\\
    \therefore x^{n}_{f}(t) &= \sin(\theta_{x}(t)), \mathrm{C = 0, because\ x^{n}_{f}(t)\ is\ normalised\ with\ \theta_{x}(0) = x^{n}_{f}(0) = 0} \\
    \therefore x_{f}(t) &= \max(\mathrm{range}(x(t))) \times \sin(\theta_{x}(t)) + x(T_{fk}) (\mathrm{By\ Definition~\ref{def:fa}})\\
    \therefore x(t) &= \max(\mathrm{range}(x(t))) \times \sin(\theta_{x}(t)) + x(T_{k}) (\mathrm{By\ Definition~\ref{def:fa}\ and\ assumption})
  \end{align*}
\end{proof}

\vspace{-20pt}

\begin{lemma}[Convergence to the guard]
  \label{lemma:guardconverge}
  Consider some state variable $x(t)$ in location $v(t)$ in a well
  formed \ac{HA}, where the \acp{ODE} in the location evolve the \ac{HA}
  towards one of its outgoing guards, $\mathcal{H}$ for
  $\forall t \in [T_{k}, T_{k+1})$. Consider, without loss of generality, some
  out going edge ($e$) guard from location $v(t)$, such that
  $x(T_{k+1})\ \mathrm{satisfies}\ G(e)$, where $T_{k+1}$ is the first
  time where the outgoing edge guard condition holds. Then
  $\int_{t}^{t+\Delta t}\dot{x}(t) d\tau\ \mathrm{satisfies}\ G(e)$ and
  $t + \Delta t = T_{k+1}$.
\end{lemma}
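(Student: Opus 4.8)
The plan is to push everything through the identity of Lemma~\ref{lemma:eqstatevar}: satisfying an outgoing guard of $v(t)$ is equivalent to the vector angle $\theta_x(t)$ on the unit circle reaching a fixed target angle, and then I show that the monotone sweep of $\theta_x$ cannot skip that target, so it is attained exactly and at the first guard-satisfaction time. The final statement about $\int_t^{t+\Delta t}\dot{x}(\tau)\,d\tau$ then follows by the fundamental theorem of calculus.

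\textbf{Step 1 (reduction to the angle).} By Lemma~\ref{lemma:eqstatevar}, on $[T_k,T_{k+1})$ we have $x(t)=x_f(t)=\max(\mathrm{range}(x(t)))\sin(\theta_x(t))+x(T_k)$ with $\theta_x(T_k)=0$ (the reset $R_f$ zeroes the angle and $x^n_f(T_k)=0$). By Definition~\ref{def:guards} the guard $G(e)$ has the form $x\bowtie q$ with $\bowtie\in\{\geq,\leq\}$, so substituting the identity shows $x(t)$ satisfies $G(e)$ exactly when $\sin(\theta_x(t))\bowtie \frac{q-x(T_k)}{\max(\mathrm{range}(x(t)))}$, i.e.\ exactly when the angle reaches $\theta_x(T):=\arcsin\!\left(\frac{q-x(T_k)}{\max(\mathrm{range}(x(t)))}\right)$ — which is precisely the angle $G_f$ assigns to $e$ in Definition~\ref{def:fa}. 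Since the normalisation divides by $\max(\mathrm{range}(x(t)))$, the argument of $\arcsin$ lies in $(-1,1)$, so $\theta_x(T)$ is a well-defined angle in $(-\pi/2,\pi/2)$, $\cos(\theta_x(t))\neq 0$ along the swept arc, and hence $\dot{\theta}_x(t)=\dot{x}^n_f(t)/\cos(\theta_x(t))$ is finite and well defined throughout.

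\textbf{Step 2 (monotone sweep and convergence).} The hypothesis that the flow in $v(t)$ drives the \ac{HA} towards the guard, combined with the restricted form $\dot{x}(t)=f(x(t))$ and well-formedness (guard disjoint from the invariant, switching behaviour only), forces $x(t)$ to move strictly monotonically from $x(T_k)$ towards $q$ on $[T_k,T_{k+1})$, with no interior equilibrium of $f$ to stall it before the guard. Composing with the strictly monotone affine map of Step~1, $\theta_x(t)$ is a strictly monotone continuous function of $t$ on $[T_k,T_{k+1})$ that starts at $0$ and stays on the arc between $0$ and $\theta_x(T)$. Applying the Increment Theorem~\ref{thm:inc} to $\theta_x$, an infinitesimal $\Delta t\neq 0$ gives $\Delta\theta_x=\dot{\theta}_x(t)\Delta t+\epsilon\,\Delta t$ for infinitesimal $\epsilon$, so infinitesimal time advances produce infinitesimal angle advances and no value on the arc is skipped; by the intermediate value property (transferred to ${^{*}\mathbb{R}}$ via Definition~\ref{def:transfer}) the sweep attains $\theta_x(T)$. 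Let $t+\Delta t$ be the first time it does; by strict monotonicity this is the first time the guard holds, so $t+\Delta t=T_{k+1}$ and $x(t+\Delta t)=x(T_{k+1})$ satisfies $G(e)$. Finally, by the fundamental theorem of calculus $x(t)+\int_{t}^{t+\Delta t}\dot{x}(\tau)\,d\tau=x(t+\Delta t)=x(T_{k+1})$, which satisfies $G(e)$, and $t+\Delta t=T_{k+1}$ as claimed.

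\textbf{Main obstacle.} The delicate part is Step~2: pinning down that $\theta_x(t)$ traverses \emph{exactly} the monotone arc from $0$ to $\theta_x(T)$ with no earlier spurious crossing — this is the ``even number of level crossings'' pathology the paper is trying to eliminate. It genuinely uses the well-formedness and switching-only hypotheses (to exclude $f$ vanishing before the guard) and the $\dot{x}=f(x)$ form, rather than a purely formal manipulation, and one must also check that the branch of $\arcsin$ chosen by $G_f$ is the branch actually traversed, which is exactly what the normalisation and the bound $|\sin\theta_x|<1$ secure.
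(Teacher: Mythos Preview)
Your argument is correct and reaches the same conclusion, but it travels a different road from the paper's own proof. You argue analytically: the normalisation confines $\theta_x$ to the principal arc $(-\pi/2,\pi/2)$, on that arc $x\mapsto\theta_x$ is a strictly monotone bijection, and then monotonicity plus the intermediate value property force $\theta_x$ to hit the guard angle at exactly $T_{k+1}$. The paper instead argues constructively: it allows $\theta_x(t)$ and the guard angle to sit anywhere on the full unit circle, enumerates the two $\arcsin$ branches for each (giving four candidate rotations $\pm\Delta\theta_1,\pm\Delta\theta_2$, cf.\ Figure~\ref{fig:arcsin}), case-splits on the sign of $\dot\theta_x(t)=\dot{x}^n_f(t)/\cos(\theta_x(t))$ to pick the rotation direction that makes $\Delta t>0$, and then selects the minimum $|\Delta\theta|$ of that sign so that $t+\Delta t$ is the \emph{first} crossing. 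What your route buys is brevity and the elimination of the branch/direction bookkeeping, by exploiting the reset $\theta_x(T_k)=0$ and the normalisation bound to stay on a single branch throughout. What the paper's route buys is an explicit formula $\Delta t=\Delta\theta_x\cos(\theta_x(t))/\dot{x}^n_f(t)$ together with a concrete selection rule for $\Delta\theta_x$, which is precisely what Algorithm~\ref{alg:guard_step} implements; the paper also appends an inductive remark (any undershoot $\Delta t'<\Delta t$ can be completed by a further $\Delta t''$) that your IVT argument renders unnecessary. One minor slip: in Step~2 you call the map of Step~1 ``affine'' --- the normalisation $x\mapsto x^n_f$ is affine, but the passage $x^n_f\mapsto\theta_x$ is $\arcsin$; what you actually use, and what holds, is strict monotonicity of the composite.
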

\begin{proof}
  We know that $x(T_{k+1})$ satisfies the outgoing edge guard condition.
  Hence, $\Delta x = x(T_{k+1}) - x(t)$ is the remaining increment (c.f.
  Definition~\ref{def:inc}) to reach the required level crossing. There
  exists a $\Delta x_{f} = \Delta x$ from Definition~\ref{def:fa}. Hence,
  $\Delta x^{n}_{f} = \frac{\Delta x_{f}}{\max(\mathrm{range}(x_{f}(t)))}$ from
  the definition of normalized variables in Definition~\ref{def:fa}.
  However, $\Delta x^{n}_{f} = \Delta \sin$ from Lemma~\ref{lemma:eqstatevar}.
  Hence, we have:
  $\Delta\theta_{x} = \theta_{x}(T_{k+1}) - \theta_{x}(t) = \arcsin(x^{n}_{f}(T_{k+1})) -
  \arcsin(x^{n}_{f}(t))$.

  In the general case, $\arcsin$ gives two possible angles (except at
  $\pi/2$ and $3\pi/2$). Figure~\ref{fig:arcsin} shows the scenario. At time
  $t$ we have two possible vectors marked 1 and 2, showing the position
  of vectors corresponding to $\arcsin(x^{n}_{f}(t))$. Similarly, we
  have two possible vectors marked $1'$ and $2'$ corresponding to the
  value of $\arcsin(x^{n}_{f}(T_{k+1}))$.

  Let us consider only vector 1, moving to vector $1'$ or $2'$ in order
  to satisfy the level crossing (same logic applies to vector 2). We
  have \textit{four} possible $\Delta\theta_{x}$ corresponding to the clockwise
  and anti clockwise rotation of the vector. The clockwise rotation of
  vector 1 to $1'$ and $2'$ are shown in Figure~\ref{fig:arcsin} as
  $-\Delta\theta_{1}$ and $-\Delta\theta_{2}$, respectively. The anti clockwise rotation of
  vector 1 to $2'$ and $1'$ would be $+\Delta\theta_{1}$ and
  $+\Delta\theta_{2}$, respectively. Hence,
  $\Delta\theta_{x} \in \{-\Delta\theta_{1}, -\Delta\theta_{2}, +\Delta\theta_{1}, +\Delta\theta_{2}\}$. Moreover, we also
  know that $0 < abs(\Delta\theta_{x}) < 2\pi$. Since, any value of
  $abs(\Delta\theta) \geq 2\pi$ can be subtracted by $2\pi$ to consider only a single
  rotation of the vector in the unit circle. If the
  $abs(\Delta\theta_{x}) = 0 = 2\pi$, then $\Delta\sin = 0$, which means
  $\Delta x^{n}_{f} = \Delta x_{f} = \Delta x = 0$. This can only happen if we have
  already reached the level crossing in a well formed \ac{HA}.

  Now we will relate the change in $\theta_{x}$ ($\Delta\theta_{x}$) to time step
  $\Delta t$. From Equation~(\ref{eq:xtheta}) in
  Lemma~\ref{lemma:eqstatevar}, we have:
  $\dot{\theta}_{x}(t) = \frac{\dot{x}^{n}_{f}(t)}{\cos(\theta(t))}$. Hence,
  applying the increment theorem (Theorem~\ref{thm:inc}) to
  $\Delta\theta_{x}$ and substituting $\dot{\theta}_{x}(t)$ we have:
  $\Delta\theta_{x} = \left (\frac{\dot{x}^{n}_{f}(t)}{\cos(\theta(t))} + \epsilon_{t}\right
  )\Delta t$. Increment $\Delta t$ can only be non-negative, since time can only
  proceed forward. Furthermore, $\Delta t = 0$, when
  $\Delta\theta_{x} = 0$, which entails
  $\Delta\sin = \Delta x^{n}_{f} = 0$. This is the case when we have already
  reached the level crossing, as mentioned before. Now let us consider
  the remaining cases for
  $\dot{\theta}_{x}(t) = \frac{\dot{x}^{n}_{f}(t)}{\cos(\theta(t))}$.

  \begin{enumerate}
  \item $\dot{\theta}_{x}(t) < 0$: then $\Delta \theta$ has to be less than zero, for
    $\Delta t$ to be greater than zero. In such a case, we choose
    $\Delta\theta_{x} = \min\{abs(-\Delta\theta_{1}), abs(-\Delta\theta_{2})\}$. Hence,
    $t + \Delta t = T_{k+1}$ is the first instant when
    $x^{n}_{f}(T_{k+1})$ satisfies the guard condition. Following from
    Lemma~\ref{lemma:eqstatevar},
    $\int_{t}^{t+\Delta t}\dot{x}(t) d\tau\ \mathrm{satisfies}\ G(e)$ and
    $t + \Delta t = T_{k+1}$.
  \item $\dot{\theta}_{x}(t) > 0$: then $\Delta \theta$ has to be greater than zero,
    for $\Delta t$ to be greater than zero. In such a case, we choose
    $\Delta\theta_{x} = \min\{+\Delta\theta_{1}, +\Delta\theta_{2}\}$. Hence,
    $t + \Delta t = T_{k+1}$ is the first instant when
    $x^{n}_{f}(T_{k+1})$ satisfies the guard condition. Following from
    Lemma~\ref{lemma:eqstatevar},
    $\int_{t}^{t+\Delta t}\dot{x}(t) d\tau\ \mathrm{satisfies}\ G(e)$ and
    $t + \Delta t = T_{k+1}$.
  \end{enumerate}

  Finally, if we choose $\Delta t' < \Delta t$, where
  $\Delta t = T_{k+1} - t$, then by induction we always have another
  $\Delta t''$, such that
  $\int_{t+\Delta t'}^{t+\Delta t''}\dot{x}(t) d\tau\ \mathrm{satisfies}\ G(e)$ and
  $t+\Delta t' + \Delta t'' = T_{k+1}$.
\end{proof}

\begin{wrapfigure}{l}{0.45\textwidth}
  \centering
  \includegraphics[scale=1.5]{./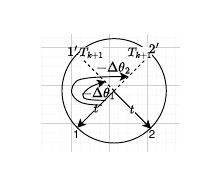}
  \caption{Level crossings and current vector position projected to the
    unit circle}
  \label{fig:arcsin}
\end{wrapfigure}

Now that we have shown that the \ac{FA} can simulate an \ac{HA}, we can
translate the \ac{HA} to a \ac{FA} for simulation. In the next section
we give the simulation algorithm for \ac{HA}.


\section{An efficient simulation algorithm }
\label{sec:an-effic-simul}

We take a top-down approach to explain the proposed numerical simulation
algorithm. Pseudo-code for executing an equivalent \ac{FA} translated
from a \ac{HA} is shown in Algorithm~\ref{alg:fa}. The algorithm takes
as input a \ac{HA}. The output is the trace $\mathcal{X}$ as defined in
Definition~\ref{def:fasemantics}.

A input \ac{HA} $\mathcal{H}$ is first translated into a \ac{FA}
$\mathcal{F}$ on line~\ref{line:convert} as described in Definition~\ref{def:fa}.
A global clock variable $T_{c}$ is initialised on line~\ref{line:clock}.
Next, a while loop executes, computing the trace of the \ac{FA} and
incrementing the clock ($T_{c}$) until user specified time $T_{max}$ is
reached (lines~\ref{line:while_start}-\ref{line:while_end}).

\begin{algorithm}
  \caption{Pseudo-Code for simulation of \ac{FA}}
  \begin{footnotesize}
    \begin{algorithmic}[1]
      \Require \ac{HA} $\mathcal{H}$ \Ensure Output Trace $\mathcal{X}$ of
      $\mathcal{F}$ \State
      $\mathcal{F} \leftarrow \mathtt{convertToFA}(\mathcal{H}$); \label{line:convert} \State
      $ T_{c} \leftarrow 0$ \Comment{Clock initialisation}; \label{line:clock}
      \While{$T_{c} \leq T_{max}$} \label{line:while_start} \State
      $\mathcal{X} \leftarrow \mathtt{Record}(T_{c}, \llbracket \mathcal{F} \rrbracket)$; \Comment{Record trace
        $\llbracket \mathcal{F} \rrbracket$ following
        Definition~\ref{def:fasemantics}}; \label{line:record}
      \If{$\mathcal{F}.\mathtt{guardEnabled}()$} \label{line:guard_start}
      \State $\mathcal{F}$.\texttt{switch}();\Comment{Location switch
        transition};\label{line:interloc} \State
      \textbf{continue};\Comment{Global clock does not
        increment} \label{line:intercont} \EndIf \label{line:guard_end}
      \State $t_{f*} \leftarrow \emptyset
      $; \label{line:intra_start}\Comment{Intra-location transition};
      \ForAll{$outgoing\ edges'\ guard\ (g_{f}) in\ the\ location$}
      \State $t_{f*}.\mathtt{append}(\mathcal{F}.\mathtt{computeDelta}(T_{c}, g_{f}))$;
      \label{line:compute_delta}
      \EndFor
      \State$S_{ft}\leftarrow \min(t_{f*})$;\label{line:intra_min} \State
      $\mathcal{F}.\mathrm{executeIntra}(S_{ft})$; \label{line:execintra} \State
      $T_{c} \leftarrow T_{c} + S_{ft}$ \Comment{Increment global
        clock}\label{line:intra_end} \EndWhile \label{line:while_end}
      \State \Return $\mathcal{X}$
    \end{algorithmic}
  \end{footnotesize}
  \label{alg:fa}
\end{algorithm}

The algorithm starts with recording the current state of the state
variables in the \ac{FA} (line~\ref{line:record}). Next, the algorithm
checks if any out going edge guards from the current location are
already enabled (line~\ref{line:guard_start}). If so, a switch is made
to the next location with the state (and associated) variables reset
according to the reset conditions of the guard
(line~\ref{line:interloc}). The loop continues without incrementing the
global clock ($T_{c}$) in such a scenario (line~\ref{line:intercont}).

If no outgoing edge guard is enabled, then a \textit{so called} intra
location transition is taken
(lines~\ref{line:intra_start}---\ref{line:intra_end}). First, the step
size $t_{*}$ is computed using function \texttt{computeDelta}, which
follows Lemma~\ref{lemma:guardconverge}, for each outgoing edge guard
from the current location, and is described next. The minimum ($S_{ft}$)
from amongst all these computed time steps is used as the time step to
evolve the \acp{ODE} in the current location
(line~\ref{line:execintra}). Finally, the global clock ($T_{c}$) is
incremented with $S_{ft}$ and the loop continues.

\subsection{Step Size Calculation Algorithm}
\label{ssec:step-size-alg}

The time step size calculation pseudo-code is shown in
Algorithm~\ref{alg:guard_step}. The algorithm takes as input the \ac{FA}
$\mathcal{F}$, along with the outgoing edge guard and the current global
simulation clock time $T_{c}$. Next, using
Lemma~\ref{lemma:guardconverge}, the maximum change in the vector angle
($\Delta\theta_{x}$) for some variable $x_{f}(t)$ for the given guard is
calculated along with the associated time step $\Delta t$
(lines~\ref{ll:1}---\ref{ll:4}). If the maximum change in the angle
$\Delta\theta_{x}$ (c.f. Lemma~\ref{lemma:guardconverge}) is greater than some
user specified change $\Delta\theta_{u}$, then the maximum change is reset to user
specified change.

From Lemma~\ref{lemma:guardconverge} we have
$\Delta\theta_{x} = \left (\frac{\dot{x}^{n}_{f}(t)}{\cos(\theta(t))} + \epsilon_{t}\right )\Delta
t$. Hence, computation of time step $\Delta t$ includes some infinitesimal
error $\epsilon_{t}$. In order to bound this error, the user specifies a
maximum error bound $\epsilon_{u}$. Two different normalized variable values at
time $T_{c} + \Delta t$ and $T_{c} + \Delta t/2$ are calculated for the computed
$\Delta t$ (Algorithm~\ref{alg:guard_step}, lines~\ref{ll:6}---\ref{ll:7}).
While the absolute difference between these two values
(line~\ref{ll:10}) is greater than the user specified error bound
$\epsilon_{u}$, we keep on halving the time step. Once the user specified error
bound is met, the resultant time step $\Delta t$ is returned
(line~\ref{ll:12}).

It is worth noting that since the time step is computed using the
Lemma~\ref{lemma:guardconverge} and the loop bounding the $\Delta t$ error
only ever reduces the time step, we are guaranteed to converge to the
level crossing without missing it as described in
Lemma~\ref{lemma:guardconverge}. We show this experimentally in the next
section.


\begin{algorithm}
  \caption{Compute time step size}
  \begin{footnotesize}
    \begin{algorithmic}[1]
      \Require current time $T_{c}$, angle of the guard $g_f(T)$
      \Ensure Step size $\Delta t$
      
      \Function{computeDelta}{$T$, $g_f(T)$} 
      \If{$\Delta\theta_{x} > \Delta\theta_{u}$} \label{ll:1}
      \State $\Delta\theta_{x} = \Delta\theta_{u}$ \Comment{Upper bound of change specified
        by user} \label{ll:2}
      \EndIf \label{ll:3}
      \State $\Delta t \leftarrow \frac{\Delta \theta_{x} 
        \cos(\theta_{x}(T))}{\dot{x}^{n}_{f}(T)}$
      \Comment{Compute $\Delta t$ using Lemma~\ref{lemma:guardconverge}} \label{ll:4}
      \Do \label{ll:5}
      \State $x^{n}_{fc} = \sin(\theta_{x}(T_{c} + \Delta t))$; \Comment{Compute
        normalized state variable using Lemma~\ref{lemma:eqstatevar}} \label{ll:6}
      \State $x^{n'}_{fc} = \sin(\theta_{x}(T_{c} + \Delta t/2))$; \Comment{Compute
        normalized state, for $\Delta t/2$ variable using
        Lemma~\ref{lemma:eqstatevar}} \label{ll:7}
      \State $\Delta t \leftarrow \Delta t/2$ \Comment{Half the time step} \label{ll:8}
      \doWhile{ \label{ll:9}
        $abs(x^{n}_{fc} - x^{n'}_{fc}) \leq \epsilon_{u}$ \Comment{Compute until
          user specified error bound} \label{ll:10}
      } \label{ll:11}
      \State \textbf{return} $\Delta t$ \label{ll:12}
      \EndFunction
    \end{algorithmic}
  \end{footnotesize}
  \label{alg:guard_step}
\end{algorithm}

\section{Experimental results}
\label{sec:experimental-results}

In this section, we perform a quantitative comparison of the proposed
simulation algorithm for \ac{HA} vis-\'a-vis\\ Simulink/Stateflow\textregistered, which
is an industry grade simulation tool for hybrid systems. We use three published
benchmarks with varying complexity of \acp{ODE} and guards for
comparison. The first benchmark is the steering wheel system, which was
introduced as our motivating example in
Section~\ref{sec:runn-example-probl}.

The second benchmark is the robot control example presented
in~\cite{malik2020dynamic} where the robot moves in a ellipsoidal
trajectory, in a two dimensional cartresian plane. There are two
locations in this benchmark. In the initial location the \acp{ODE} are:
$\dot{x}(t) = 5\sin(a(t))$, and $\dot{y}(t) = 5\cos(a(t))$ and
$\dot{a}(t) = 0.9$, where $x(t)$ and $y(t)$ are the horizontal and
vertical position of the robot and $a(t)$ is the heading. The robot
stops when it hits the obstacle. The position of the obstacle is
expressed as the guard constraint $y(t) \geq 12x^{2}(t) - 54x(t) + 65$.
Once this outgoing guard holds, a transition is made to the second
location, where evolution of the state variables stops and the system
halts, indicating that the robot stops moving.

The third benchmark is the modified water heating example
from~\cite{raskin2005introduction}. This benchmark has three locations:
\texttt{S0}, \texttt{ON}, and \texttt{OFF}, respectively. There are two
state variables $temp(t)$ and $time(t)$ indicating the temperature of
the water and system clock, respectively. The temperature of water is
initially set to 30 degrees and the \acp{ODE} in the location
\texttt{S0} are $\dot{time}(t) = 1$ and $\dot{temp}(t) = 0$. When the
outgoing edge guard from location \texttt{S0}: $timer(t) \geq 5$ holds; a
discrete transition to the \texttt{ON} location is made. In the
\texttt{ON} location; the \acp{ODE} for the state variables are
$\dot{time}(t) = 1$ and $\dot{temp}(t) = 0.075\times(150-temp(t))$. Once the
temperature of water reaches 100 degrees, it makes a transition to the
\texttt{OFF} location where all continuous variables stop evolving and
the model halts execution.

\subsection{Experimental Set-up}
\label{sec:experimental-set-up}

All our experiments are carried out on an Apple M1 Pro CPU executing at
3.2GHZ with 16GB memory. The Matlab release we use for the
Simulink/Stateflow\textregistered\ models is 2022a. The \ac{FA} models are implemented
and simulated in C. For all our Simulink/Stateflow\textregistered\ benchmark models,
we use the ODE45 solver. In Simulink/Stateflow\textregistered, the step size is in the
time domain, whereas in \ac{FA}, the step size is the increment of the
angle projected onto the unit circle for each continuous variable. We
set various maximum step sizes for each model and compare the results.
We use a high-resolution ODE45 solver with the maximum step of $1e^{-4}$ as the reference benchmark trace for all our case studies. We measure the simulation precision, which is measured by the correlation
coefficient to the reference model. 
The simulation efficiency is measured by the number of steps and the
execution time (in $\sec$) given identical maximum simulation time (maximum).

\subsection{Results}
\label{sec:results}

\begin{figure}[tbh]
	\centering
	
	\begin{subfigure}[b]{0.23\textwidth}
		
		\begin{tikzpicture}
			\begin{axis}[
				width=4cm, height=3.5cm,
                    domain=0:50,
                    samples=20,
				xlabel={\small Time (s)},
				ylabel={ \tiny $\cos(x(t))$},
				ylabel style={at={(axis description cs:0.2,.5)}, anchor=south},
				xmin=0, xmax=50,
				ymin=-1, ymax=1,
				]
				\addplot[mark=*, draw=black] 
				table [x=time_s_auto, y=out_s_auto, col sep=comma] {./data/case_one.csv};
			\end{axis}
		\end{tikzpicture}
		\caption{\scriptsize Simulink max step = auto}
		\label{fig:sl_step_auto}
	\end{subfigure}
	\hspace{0.03\textwidth}
	\begin{subfigure}[b]{0.23\textwidth}
		\begin{tikzpicture}
			\begin{axis}[
				width=4cm, height=3.5cm,
                    domain=0:50,
                    samples=20,
				xlabel={\small Time (s)}, 
				ylabel={ \tiny $\cos(x(t))$},
				ylabel style={at={(axis description cs:0.2,.5)}, anchor=south},
				xmin=0, xmax=50,
				ymin=-1, ymax=1,
				]
				\addplot[mark=*, draw=black] 
				table [x=time_s_0_1, y=out_s_0_1, col sep=comma] {./data/case_one.csv};
			\end{axis}
		\end{tikzpicture}
		\caption{\scriptsize Simulink max step = 0.1s}
		\label{fig:sl_step_0.1}
	\end{subfigure}
	\hspace{0.03\textwidth}
	\begin{subfigure}[b]{0.23\textwidth}
		\begin{tikzpicture}
			\begin{axis}[
				width=4cm, height=3.5cm,
                    domain=0:50,
                    samples=20,
				xlabel={\small Time (s)}, 
				ylabel={ \tiny $\cos(x(t))$},
				ylabel style={at={(axis description cs:0.2,.5)}, anchor=south},
				xmin=0, xmax=50,
				ymin=-1, ymax=1,
				]
				\addplot[mark=*, draw=black] 
				table [x=time_s_0.01, y=out_s_0.01, col sep=comma] {./data/case_one.csv};
			\end{axis}
		\end{tikzpicture}
		\caption{\scriptsize Simulink max step = 0.01s}
		\label{fig:sl_step_0.01}
	\end{subfigure}
	
	\vspace{0.5em}
	
	\begin{subfigure}[b]{0.23\textwidth}
		\begin{tikzpicture}
			\begin{axis}[
				width=4cm, height=3.5cm,
                    domain=0:50,
                    samples=20,
				xlabel={\small Time (s)}, 
				xmin=0, xmax=50,
				ylabel={ \tiny $\cos(x(t))$},
				ylabel style={at={(axis description cs:0.2,.5)}, anchor=south},
				ymin=-1, ymax=1,
				]
				\addplot[mark=*, draw=black] 
				table [x=time_fa_pi_10, y=out_fa_pi_10, col sep=comma] {./data/case_one.csv};
			\end{axis}
		\end{tikzpicture}
		\caption{\scriptsize FA max step-angle = $\pi/10$}
		\label{fig:fa_step_pi10}
	\end{subfigure}
	\hspace{0.03\textwidth}
	\begin{subfigure}[b]{0.23\textwidth}
		\begin{tikzpicture}
			\begin{axis}[
				width=4cm, height=3.5cm,
                    domain=0:50,
                    samples=20,
				ylabel={ \tiny $\cos(x(t))$},
				ylabel style={at={(axis description cs:0.2,.5)}, anchor=south},
				xlabel={\small Time (s)}, 
				xmin=0, xmax=50,
				ymin=-1, ymax=1,
				]
				\addplot[mark=*, draw=black] 
				table [x=time_fa_pi_50, y=out_fa_pi_50, col sep=comma] {./data/case_one.csv};
			\end{axis}
		\end{tikzpicture}
		\caption{\scriptsize FA max step-angle = $\pi/50$}
		\label{fig:fa_step_pi50}
	\end{subfigure}
	\hspace{0.03\textwidth}
	\begin{subfigure}[b]{0.23\textwidth}
		\begin{tikzpicture}
			\begin{axis}[
				width=4cm, height=3.5cm,
                    domain=0:50,
                    samples=20,
				ylabel={ \tiny $\cos(x(t))$},
				ylabel style={at={(axis description cs:0.2,.5)}, anchor=south},
				xlabel={\small Time (s)}, 
				xmin=0, xmax=50,
				ymin=-1, ymax=1,
				]
				\addplot[mark=*, draw=black] 
				table [x=time_fa_pi_100, y=out_fa_pi_100, col sep=comma] {./data/case_one.csv};
			\end{axis}
		\end{tikzpicture}
		\caption{\scriptsize FA max step-angle = $\pi/100$}
		\label{fig:fa_step_pi100}
	\end{subfigure}
	
	\caption{\scriptsize Comparison of Simulink and FA step sizes and precision. 
		Y-axis in all plots shows $\cos(x(t))$ for the steering wheel benchmark.}
	\label{fig:stepsize}
\end{figure}
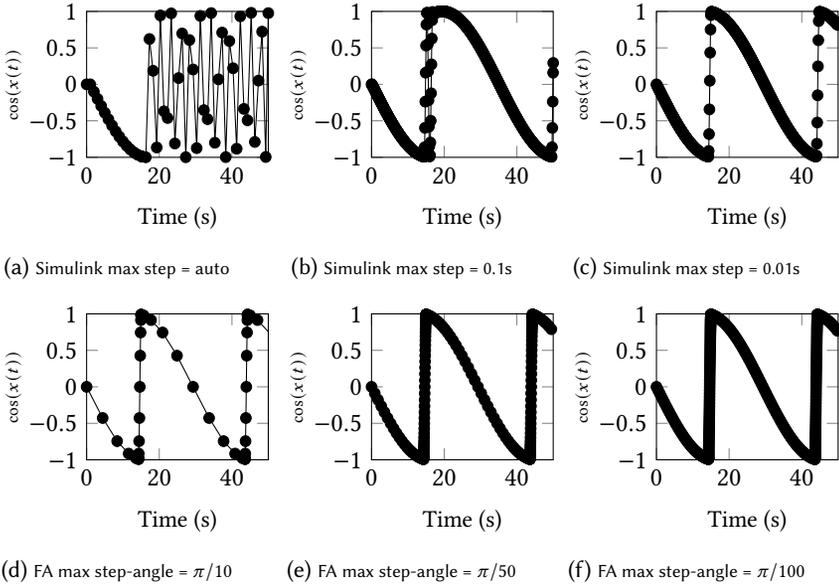

Figure~\ref{fig:stepsize} shows the steering wheel simulation result of
\ac{FA} and Simulink/Stateflow\textregistered. The plots in the top row
(Figures~\ref{fig:sl_step_auto},~\ref{fig:sl_step_0.1} and
~\ref{fig:sl_step_0.01}) are simulation results of Simulink/Stateflow\textregistered\
with various maximum step sizes. The plots in the bottom row
(Figures~\ref{fig:fa_step_pi10},~\ref{fig:fa_step_pi50} and
~\ref{fig:fa_step_pi100}) are the outputs generated using the proposed
technique with various maximum angle increments ($\Delta\theta_{u}$
Algorithm~\ref{alg:guard_step}). Each step is marked as a black circle
and the error tolerance is set to $1e^{-6}$ in both tools. In
Simulink/Stateflow\textregistered, there is an option to set the maximum time step
size to ``auto'', which allows the solver to select the maximum time
step size automatically. However, when this option is selected,
Simulink/Stateflow\textregistered\ cannot detect the level crossing, causing erroneous
behaviour (Figure~\ref{fig:sl_step_auto}). When the maximum time step
size is bounded to 0.1 $\sec$, Simulink/Stateflow\textregistered\ somewhat, detects
the level crossing. However, the stiff behaviour due to the sudden
change in ODE causes unstable location jump
(Figure~\ref{fig:sl_step_0.1}). Simulink/Stateflow\textregistered\ finally generates
the output similar to the reference with the maximum time step size
bounded to 0.01 $\sec$ (Figure~\ref{fig:sl_step_0.01}).

Figure~\ref{fig:fa_step_pi100} shows the simulation results of
\ac{FA} with the maximum angle increment bounded to $\pi/100$. It shows
that the level crossing is correctly detected.
Figures~\ref{fig:fa_step_pi50} and ~\ref{fig:fa_step_pi10} show the
simulation of \ac{FA} when the maximum angle increment is increased to
$\pi/50$ and $\pi/10$, respectively. As the figures show, the increased
maximum allowed angle increment reduces the number of steps while still
accurately detecting the level crossing. Notably, this detection occurs
with significantly fewer steps (Figure~\ref{fig:fa_step_pi10}) compared
to the Simulink/Stateflow\textregistered\ simulation (Figure~\ref{fig:sl_step_0.01}).

\begin{figure}[bth]
  \centering
  \scalebox{0.5}{
    \begin{tikzpicture}
      \centering
      \begin{axis}[
        view={-35}{25},
        xmin=0, xmax=10,
        domain=0:20,
        ymin=-6, ymax=6,
        zmin=0, zmax=8,
        scaled x ticks=false,
        xlabel={x}, 
        ylabel={y},
        zlabel={Time(s)},
        label style={font=\bfseries\boldmath},
        tick label style={font=\bfseries\boldmath},
        legend pos=outer north east,
        ]
        \addplot3 [ draw=red, line width=3pt] 
        table [x=x_ref, y=y_ref, z=time_ref, col sep=comma] {./data/robot_wrong.csv};
        \addlegendentry{Reference model}
        
        \addplot3 [ mark=x, draw=blue, thin] 
        table [x=x, y=y, z=time, col sep=comma] {./data/robot_wrong.csv};
        \addlegendentry{Simulink/Stateflow\textregistered\ 0.1 max timestep}
        \addplot3 [, mark=*, draw=green, line width=0.1pt, mark size=1.5pt, mark options={green}] 
        table [x=x_f_min, y=y_f_min, z=time_f_min, col sep=comma] {./data/robot_wrong.csv};
        \addlegendentry{\ac{FA} with $\pi/10$ max angle increment}
        
        \draw[->, ultra thick, black] (axis cs:3 ,0,1) -- (axis cs:2.144131, 4.184763,1.010667);
        \node at (axis cs:4,-1,1) {\textbf{Obstacle}};
      \end{axis}
    \end{tikzpicture}
  }
  \caption{\scriptsize Simulation of the second benchmark --- Robot Control}
  \label{fig:robot_wrong}
\end{figure}
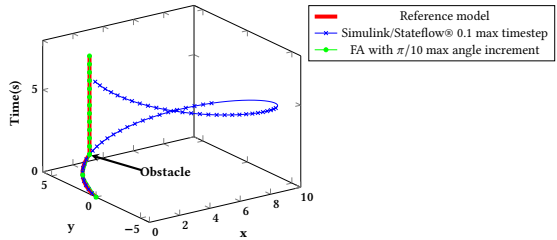

Figure~\ref{fig:robot_wrong} illustrates the simulation of the second
benchmark, which models the movement of a robot in the x-y coordinate
plane within a 3D space. The vertical axis is time. The red plot is the
reference model, obtained with a high-resolution ODE45 solver as
discussed in Section~\ref{sec:experimental-set-up}, which shows that the
robot stops moving upon reaching the obstacle boundary, defined by the
constraint $y(t) \geq 12x^{2}(t) - 54x(t) + 65$. If this guard condition is
met, it indicates that the robot has collided with the obstacle. The
simulation result of Simulink/Stateflow\textregistered\ is shown with the blue plot,
with the maximum step size set to 0.01 $\sec$ and the error tolerance to
$1e^{-2}$. The simulation steps, marked by blue crosses, indicate that
Simulink/Stateflow\textregistered\ fails to detect the guard, causing the robot to
continue moving \textit{through the obstacle} even after colliding with
it. The simulation results of \ac{FA} are shown with the green plot,
with simulation steps marked by green dots. The maximum angle increment
is set to $\pi/10$. The simulation of \ac{FA} correctly detects the guard
and stops the robot upon reaching the obstacle boundary.

Figure~\ref{fig:water_heating} shows the simulation results of the third
benchmark, water heating system. 
The guard predicate to detect the water temperature reaching 100 degrees
should be $temp(t) == 100$. However, Simulink/Stateflow\textregistered\ cannot handle
equality in guard conditions even with a very fine time step size
resolution of 0.0001 $\sec$. Figure~\ref{fig:water_heating_exact} shows
the simulation outputs of \ac{FA} and Simulink/Stateflow\textregistered\ with the
guard condition of $temp(t) == 100$. The green plot is the simulation of
Simulink/Stateflow\textregistered\ with the step size of 0.0001 $\sec$, and it fails
to detect the guard, causing the water temperature to keep increasing
over 100 degrees. On the other hand, \ac{FA} can accurately capture the
equality guard and make the correct transition, as shown in the blue
plot. Only when we change the guard condition to $temp(t) \geq 100$, can
Simulink/Stateflow\textregistered\ detect the guard as shown in
Figure~\ref{fig:water_heating_greater}. 

\begin{figure}[!htbp]
  \centering
   \hspace{-0.1\textwidth}
  \begin{subfigure}[b]{0.35\textwidth}
    \centering
    \begin{tikzpicture}
      \begin{axis}[
        width=4cm, height=4.2cm,
        xlabel={\tiny time}, 
        ylabel={\tiny temperature},
        ylabel style={at={(axis description cs:0.1,0.5)}, anchor=south},
        xmin=0, xmax=20,
        domain=0:20,
        ymin=10, ymax=120,
        scaled x ticks=false,
        label style={font=\bfseries\boldmath},
        tick label style={font=\bfseries\boldmath, font=\tiny},
        legend pos=outer north east,
        legend style={draw=none, fill=none, font=\scriptsize,
         cells={anchor=west}},
        legend style={align=left},
        ]
        \addplot[  draw=blue,  line width=1pt] 
        table [x=time_f, y=temp_f, col sep=comma] {./data/water.csv};
        \addlegendentry{\ac{FA} with $\pi/10$ \\ angle step}
        
        \addplot[draw=green,  line width=1pt] 
        table [x=time_equal, y=temp_equal, col sep=comma] {./data/water.csv};
        \addlegendentry{Simulink \\ 0.0001 max \\ step size}
      \end{axis}
      
    \end{tikzpicture}
    \caption{\scriptsize Guard: $temp(t) == 100$}
    \label{fig:water_heating_exact}
  \end{subfigure}%
  \hspace{0.1\textwidth}
  \begin{subfigure}[b]{0.35\textwidth}
    \centering
    \begin{tikzpicture}
      \begin{axis}[
        width=4cm, height=4.2cm,
        xlabel={\tiny time}, 
        ylabel={\tiny temperature},
        xmin=0, xmax=20,
        domain=0:20,
        ymin=10, ymax=120,
        scaled x ticks=false,
        label style={font=\bfseries\boldmath},
        tick label style={font=\bfseries\boldmath,font=\tiny},
        legend pos=outer north east,
        legend style={draw=none, fill=none, font=\scriptsize,
        ylabel style={at={(axis description cs:-0.1,0.5)}, anchor=south},	cells={anchor=west}},
        legend style={align=left},
        ]
        \addplot [ draw=red, line width=2pt] 
        table [x=time_ref, y=temp_ref, col sep=comma] {./data/water.csv};
        \addlegendentry{Reference model}
        
        \addplot [ mark=x, mark size = 1.2pt, draw=blue, opacity=1.0] 
        table [x=time_f, y=temp_f, col sep=comma] {./data/water.csv};
        \addlegendentry{\ac{FA} with $\pi/50$ \\ max angle increment}
        
        \addplot [ mark=star, mark size = 0.4pt, draw=green, opacity=0.8] 
        table [x=time_h, y=temp_h, col sep=comma] {./data/water.csv};
        \addlegendentry{Simulink \\ 0.1 max step size}
      \end{axis}
    \end{tikzpicture}
    \caption{\scriptsize Guard: $temp(t) \geq 100$}
    \label{fig:water_heating_greater}
  \end{subfigure}
  
  \caption{\scriptsize Equality guard condition for \ac{FA} and Simulink/Stateflow\textregistered\ }
  \label{fig:water_heating}
\end{figure}
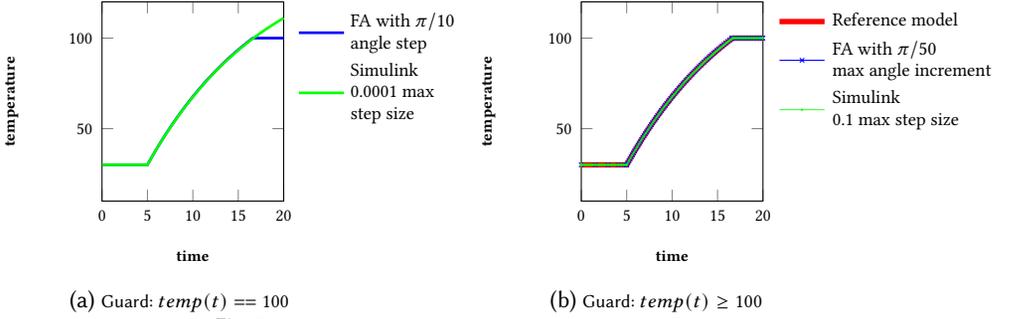

\begin{table}[bth]
  \scalebox{0.7}{
    \begin{tabular}{l c c c c c c r}
      \toprule
      &\multicolumn{2}{c}{Steering Wheel} & \multicolumn{2}{c}{Robot Control} & \multicolumn{2}{c}{Water heating}\\
      Tools (max step) & Accuracy & \# of Steps ($\sec$)&Accuracy & \# of Steps ($\sec$)& Accuracy&\# of Steps ($\sec$)\\
      \midrule
      
      \multicolumn{5}{l}{ Error tolerance = $1e^{-6}$} & \\
      \bottomrule
      Simulink(auto) & 0.110072092& 53(0.421585)& 0.57867301& 56(0.187890)& 0.91580121& 61(0.771031)\\
      Simulink/Stateflow\textregistered (0.1) & 0.45443298& 505(0.271238)& \textbf{>0.99999999}& \textbf{77(0.191814)}& 0.97567008& 209(0.753462)\\
      Simulink(0.01) & 0.97909321& 5008(0.377498)& >0.99999999& 706(0.252160))& 0.99746838& 2008(0.791454)\\
      Simulink(0.001) & \textbf{0.99979234}& \textbf{50007(0.950637)}& >0.99999999& 7005(0.582999)& \textbf{0.99979234}& \textbf{20006(1.023724)}\\
      
      FA ($\pi/10$) & \textbf{0.999969946}& \textbf{38(0.000969)}& \textbf{>0.99999999}& \textbf{24(0.000569)}& \textbf{>0.99999999}& \textbf{7802(0.785956)}\\ 
      FA ($\pi/50$) & 0.999970735& 190(0.003597)& >0.99999999& 31(0.000767)& >0.99999999& 7822(0.80612)\\ 
      FA ($\pi/100$) & 0.999970816& 380(0.007552)& >0.99999999& 39(0.001125)& >0.99999999& 7847(0.857779)\\ 
      FA ($\pi/150$) & 0.999970877& 570(0.007749)& >0.99999999&	44(0.001432)& >0.99999999& 7872(0.855666)\\ 
      
      \bottomrule
      \multicolumn{5}{l}{ Error tolerance = $1e^{-4}$} & \\
      \bottomrule
      
      Simulink(auto)& 0.110072146&	53(0.284067)& 0.516281396& 54(0.182498)& 0.915398777& 57(0.775229)\\
      Simulink(0.1)& 0.454437239& 505(0.342309)& \textbf{>0.99999999}& \textbf{74(0.177585)}& 0.975574341& 205(0.762533)\\
      Simulink(0.01)& 0.979093705& 5008(0.408343)& >0.99999999& 703(0.179463)& 0.99746749& 2005(0.776408)\\
      Simulink(0.001)& \textbf{0.999792386}& \textbf{50007(1.068943)}& >0.99999999& 7003(0.251786)& \textbf{0.999745788}& \textbf{20004(0.998744)}\\
      
      FA ($\pi/10$) & \textbf{0.999969946}& \textbf{38(0.000973)}& \textbf{>0.99999999}& \textbf{18(0.000422)}& \textbf{>0.99999999}& \textbf{813(0.05741)}\\ 
      FA ($\pi/50$) & 0.999970735& 190(0.005532)& >0.99999999& 34(0.000765)& >0.99999999& 833(0.052076)\\ 
      FA ($\pi/100$) & 0.999970816& 380(0.006922)& >0.99999999& 42(0.001092)& >0.99999999& 858(0.076022)\\ 
      FA ($\pi/150$) & 0.999970877& 570(0.007439)& >0.99999999&	44(0.001514)& >0.99999999& 884(0.051781)\\ 
      \bottomrule
      \multicolumn{5}{l}{ Error tolerance = $1e^{-2}$} & \\
      \bottomrule
      
      Simulink(auto)& 0.110072146& 53(0.300692)& 0.482116372& 51 (0.176899)& 0.914488559	& 55 (0.750841)\\
      Simulink(0.1)& 0.454437239& 505(0.339678)& 0.480452864& 71 (0.191814)& 0.97553497& 205(0.785034)\\
      Simulink(0.01)& 0.979093705& 5008(0.435635)& \textbf{>0.99999999}& \textbf{703 (0.181399)}& 0.99746838& 2005(0.805409)\\
      Simulink(0.001)&\textbf{0.999792386}& \textbf{50007(1.094001)}& >0.99999999& 7003(0.254285)& \textbf{0.999745788}& \textbf{20004(0.989653)}\\
      
      FA ($\pi/10$) & \textbf{0.999969946}& \textbf{38(0.000969)}& \textbf{>0.99999999}& \textbf{16(0.000473)}& \textbf{0.999999732}& \textbf{115(0.008349)}\\ 
      FA ($\pi/50$) & 0.999970735& 190 (0.005691)& >0.99999999 & 24(0.000782)& 0.999999563& 136(0.006016)\\ 
      FA ($\pi/100$) & 0.999970816& 380 (0.007241)& >0.99999999& 34(0.001135)& 0.999999478& 166(0.008188)\\ 
      FA ($\pi/150$) & 0.999970877& 570 (0.002849)& >0.99999999& 44(0.001381)& 0.999999468& 198(0.00909)\\ 
    \end{tabular}
  }
  \caption{\scriptsize Benchmark results. Bold values indicate the fewest number of
    simulation steps needed for each simulation technique resulting in
    accuracy (correlation coefficient) > 0.999 }
  \label{tab:results}
\end{table}
Table~\ref{tab:results} summarizes the results of the benchmarks. We
compare simulation results of \\ Simulink/Stateflow\textregistered\ and \ac{FA} for each
benchmark with different parameters. For Simulink/Stateflow\textregistered\ models, we
simulated the benchmarks with different maximum time step sizes. For the
\ac{FA} models, we simulated the benchmarks with different maximum angle
increments. We also simulated all models with three different error
tolerance settings as shown in Table~\ref{tab:results}. The main
conclusion that can be drawn from Table~\ref{tab:results} is that in
general the proposed simulation technique via conversion of \ac{HA} to
\ac{FA} takes fewer number of simulation steps (is faster) for the same
or higher accuracy. This result is further expounded in
Figure~\ref{fig:step_number}.

\begin{figure}[!htbp]
  \centering
  \hspace{-0.1\textwidth}
  \begin{subfigure}[b]{0.25\textwidth}
    \begin{tikzpicture}
      \begin{axis}[
        width=5cm, height=4cm,
        xmin=30, xmax=50010,
        ymin=0, ymax=1,
        scaled x ticks=false,
        xlabel={\scriptsize Number of simulation steps}, 
        ylabel={\scriptsize Accuracy},
        ylabel style={at={(axis description cs:0.2,.5)}, anchor=south},
        label style={font=\bfseries\boldmath},
        tick label style={font=\bfseries\boldmath, font=\tiny},
        legend pos= south east,
        legend style={font=\tiny},
        ]
        \addplot[  mark = *, blue] 
        table [x=step_sl, y=eff_sl, col sep=comma] {./data/compare.csv};
        \addlegendentry{Simulink/Stateflow\textregistered\ }
        \addplot[ mark = *, red] 
        table [x=step_fa, y=eff_fa, col sep=comma] {./data/compare.csv};
        \addlegendentry{FA}
      \end{axis}
    \end{tikzpicture}
    \caption{\vspace{1mm} \scriptsize Steering wheel system}
    \label{fig:sl_step_first}
  \end{subfigure}%
  \hspace{0.08\textwidth}
  \begin{subfigure}[b]{0.25\textwidth}
    \begin{tikzpicture}
      \begin{axis}[
        width=5cm, height=4cm,
        xmin=30, xmax=1005,
        ymin=0.4, ymax=1,
        scaled x ticks=false,
        xlabel={\scriptsize Number of simulation steps}, 
        ylabel={\scriptsize Accuracy},
        ylabel style={at={(axis description cs:0.2,.5)}, anchor=south},
        label style={font=\bfseries\boldmath},
        tick label style={font=\bfseries\boldmath, font=\tiny},
        legend pos=south east,
        legend style={font=\tiny},
        ]
        \addplot[  mark = *, blue] 
        table [x=step_sl_two, y=eff_sl_two, col sep=comma] {./data/compare.csv};
        \addlegendentry{Simulink/Stateflow\textregistered\ }
        \addplot[ mark = *, red] 
        table [x=step_fa_two, y=eff_fa_two, col sep=comma] {./data/compare.csv};
        \addlegendentry{FA}
      \end{axis}
    \end{tikzpicture}
    \caption{\vspace{1mm} \scriptsize Robot Control}
    \label{fig:sl_step_second}
  \end{subfigure}%
  \hspace{0.08\textwidth}
  \begin{subfigure}[b]{0.25\textwidth}
    \begin{tikzpicture}
      \begin{axis}[
        width=5cm, height=4cm,
        xmin=30, xmax=2007,
        ymin=0.8, ymax=1,
        scaled x ticks=false,
        xlabel={\scriptsize Number of simulation steps}, 
        ylabel={\scriptsize Accuracy},
        ylabel style={at={(axis description cs:0.2,.5)}, anchor=south},
        label style={font=\bfseries\boldmath},
        tick label style={font=\bfseries\boldmath, font=\tiny},
        legend pos= south east,
        legend style={font=\tiny},
        ]
        \addplot[  mark = *, blue] 
        table [x=step_sl_three, y=eff_sl_three, col sep=comma] {./data/compare.csv};
        \addlegendentry{Simulink/Stateflow\textregistered\ }
        \addplot[ mark = *, red] 
        table [x=step_fa_three, y=eff_fa_three, col sep=comma] {./data/compare.csv};
        \addlegendentry{FA}
      \end{axis}
    \end{tikzpicture}
    \caption{\vspace{1mm} \scriptsize Water heating system}
    \label{fig:sl_step_third}
  \end{subfigure}
  
  \caption{\scriptsize Accuracy vs. Number of Steps between \ac{FA} and Simulink/Stateflow\textregistered\ }
  \label{fig:step_number}
\end{figure}
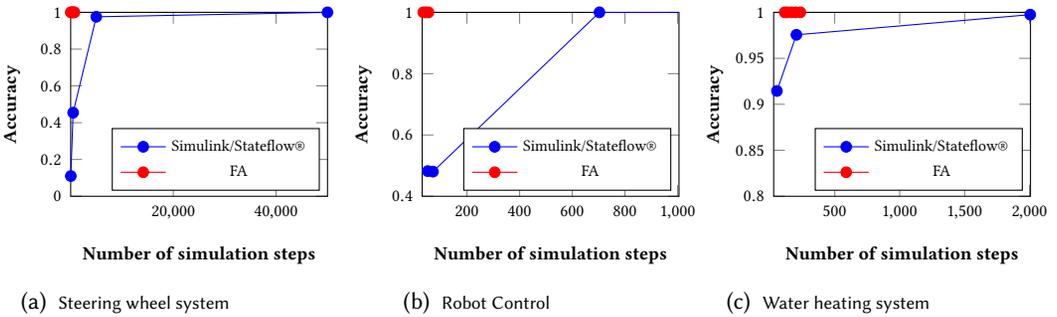

Figure~\ref{fig:step_number} shows the comparison between the accuracy
and the number of simulation steps for \ac{FA} and Simulink/Stateflow\textregistered\
(error bound of $1e^{-2}$ extracted from Table~\ref{tab:results}). The results of \ac{FA} are shown with red circles, and the results of
Simulink/Stateflow\textregistered\ are shown with blue circles.
Figures~\ref{fig:sl_step_first},~\ref{fig:sl_step_second}
and~\ref{fig:sl_step_third} are the results from the first, the second
and the third benchmark, respectively. The results show that, for all
benchmarks, \ac{FA} can generate the outputs with an accuracy close to
1, while for the same number of the simulation steps,
Simulink/Stateflow\textregistered\ generates the output with significantly lower
accuracy.

\subsection{Discussion on overall results --- \ac{FA} vs Simulink/Stateflow\textregistered}
\label{sec:discussion-1}

Overall, the best results, in Table~\ref{tab:results} (in bold), show
that \ac{FA} successfully detected all level crossings with high
accuracy with the significantly fewer steps. For the first benchmark (Figure~\ref{fig:sl_step_first}), Simulink/Stateflow\textregistered\ generates the output with a similar accuracy
(correlation coefficient of > 0.999) as the \ac{FA} with 50007 steps and
1.094001 $\sec$ execution time. On the other hand, the simulation of
\ac{FA}, resulting in correlation coefficient of > 0.9999, is
$1129\times$ faster with a runtime of 0.000969 $\sec$ taking only 38 steps.
For the results of the second benchmark, \ac{FA} generates the output $405\times$ faster with a runtime of 0.000473 $\sec$ taking 16 steps for the same accuracy (correlation coefficient > 0.99999999). Similarly, for the third benchmark, \ac{FA} generates the output (correlation coefficient of
0.999999732) $118\times$ faster with a runtime of 0.008349 $\sec$, taking only 115 steps .
This shows that \ac{FA} is much more efficient in the
number of simulation steps and the execution time. Furthermore, the
guard detection is accurate in \ac{FA}, as the angular rotation always
converges to the guard; this allows precise guard detection.
Furthermore, \ac{FA} can detect the equality guard condition, as shown
with the water heating benchmark, which Simulink/Stateflow\textregistered\ could not
detect even with a very fine resolution for the time step.


\section{Related Work --- comparison with state-of-the-art \ac{HA}
  simulation techniques}
\label{sec:related-work}

Multiple simulation techniques have been proposed for correctly handling
even number of level-crossing detection
problem~\cite{park1996state,esposito2001accurate,malik2020dynamic,ro2019compositional}.
The key in all these works is to make the integration/simulation step
size selection sensitive to the level-crossing guard conditions.

The techniques in~\cite{park1996state,esposito2001accurate} maybe too
inefficient in practice as stated in~\cite{zhang2008zero}, because they
require performing extra computations during classical numerical
integration by incorporating Lie derivatives of the guard set.
Furthermore, the most flexible technique~\cite{esposito2001accurate}
cannot handle cases when the \ac{ODE} flow is tangential to the guard
set. Similarly the simulation techniques
in~\cite{malik2020dynamic,ro2019compositional} require performing
optimisation for computing the Lagrangian error bounds at \textit{every}
integration step. To conclude, classical level-crossing detection
techniques are unable to correctly handle discontinuities during
simulation of hybrid systems. A relatively recent \ac{QSS} integration
technique~\cite{fernandez2014stand} does select quantum depending upon
the level-crossing conditions. However, experimenting with the tool
in~\cite{fernandez2014stand} shows that this tool cannot handle:
\textcircled{1} equality in guard conditions, \textcircled{2}
conjunction and disjunction on guard conditions, and \textcircled{3}
non-affine guard conditions, like we do. Furthermore, this tool also
does not take advantage of constant slope \acp{ODE} to efficiently
simulate systems like our technique does.

There also exist plethora of techniques to over-approximate the
trajectory of the hybrid system and the guard set used in verification
of safety properties of hybrid systems~\cite{chen2013flow,
  althoff2014reachability, benvenuti2014assume, SpaceEx}. In fact, the
work in~\cite{chen2013flow} uses Taylor polynomials to over-approximate
the hybrid system trajectories. The proposed technique is orthogonal to
the techniques developed in the verification community, because we
concentrate on simulating complex hybrid systems rather than
verification.


\section{Conclusions and future work}
\label{sec:concl-future-work}
Formal models, such as \acf{STL} and \acf{HA} are central to specification, 
verification and simulation of \acf{CPS}. This is since, \ac{CPS} exhibit 
hybrid dynamics and need methods for effective control. While engineers have used 
both time domain and frequency domain models to tackle the challenges of \ac{CPS}, especially 
 for control and signal processing applications, computer scientists have 
 primarily focussed on time domain models. 
 While, \acf{TFL}~\cite{donze2012temporal} is a significant step in the direction of combined time, 
 frequency-based approach, it is mainly developed as a specification language for 
  properties rather than for effective system modelling and simulation. 

  We bridge this gap, by developing the first formal model for system design, called 
  \acfp{FA}, that combines time and frequency domain modelling, especially suitable for hybrid systems.
  We provide a sound structural translation from \ac{HA} to \ac{FA}. Moreover, we show that 
  \ac{FA} can solve a number of thorny issues related to the effective simulation of 
  \ac{HA}, such as the problem of \emph{precise level crossing detection} and efficient simulation. 
  We show, for the first time, that we can precisely detect every level crossing of an \ac{HA}.
  Moreover, we empirically show that the total number of steps needed for simulation is 
  significantly improved relative to other state of the art simulation schemes in the time domain.
  
  While our work paves the way for an unified approach of time and
  frequency domain modelling and simulation, we have some limitations of
  the framework. First, we have restricted the \ac{HA} to capture
  dynamics of the form $\dot{x}(t) = f(x(t))$. We need to generalise
  this to capture other types of complex dynamics. Second, this paper
  primarily focuses on simulation. However, given the efficiency of
  simulation in the frequency domain, it may be of interest to verify
  \ac{TFL} properties over \ac{FA} specifications. We will shortly
  explore these problems as future work.


\bibliographystyle{ACM-Reference-Format}
\bibliography{refb}


\end{document}